\documentclass[twocolumn,amsthm]{autart}

\usepackage{amsmath}
\usepackage{amssymb}
\usepackage{enumitem}
\usepackage{graphicx}
\usepackage{mathtools}
\usepackage{nicefrac}
\usepackage{subcaption}
\usepackage{url}

\begin{document}

\begin{frontmatter}
    \runtitle{Risk-Aware Control of Nonlinear Systems}
    \title{Risk-Aware Control of Systems with\\Quasi-Cone-Bounded Nonlinearities
    \thanksref{titlefootnote}}
    \thanks[titlefootnote]{This article is based on the MASc thesis by D. Patel \cite{patel2025risk}. It was not presented at an IFAC meeting and was not published in any conference proceedings. This research was supported by the Edward S. Rogers Sr. Department of Electrical and Computer Engineering at the University of Toronto, and by the Natural Sciences and Engineering Research Council of Canada (NSERC) Discovery Grants Program [RGPIN-2022-04140]. Cette recherche a été financée par le Conseil de recherches en sciences naturelles et en génie du Canada (CRSNG).\\ $^\ast$\hspace{1.5mm}Corresponding author: D. Patel (email address: dhairya.patel@mail.utoronto.ca).}

    \author[UofT]{Dhairya Patel$^{\ast\text{,}}$}
    and
    \author[UofT]{Margaret P. Chapman}
    \address[UofT]{The Edward S. Rogers Sr. Department of Electrical and Computer Engineering, University of Toronto, 10 King's College Road, Toronto, Ontario, Canada M5S 3G4 }

    \begin{abstract}
        We develop a tractable, rigorous approach to risk-aware control for a class of nonlinear systems. While many classical control methods reduce uncertainty to a simple average or a worst-case outcome, risk-aware control aims to equip systems with a refined awareness of uncertainty. Efficient methods for risk-aware control of linear systems are available, but there is a paucity of tools for tractable, risk-aware control of nonlinear systems. To bridge this gap, we develop an analytical, suboptimal controller with respect to a risk-aware performance criterion for systems with nonlinearities characterized by cone-like bounds. Numerical examples demonstrate benefits of the characterization of nonlinearities and risk that we consider.
    \end{abstract}

    \begin{keyword}
        Stochastic control; Risk-aware control; Systems with structured nonlinearities.
    \end{keyword}
\end{frontmatter}

\section{Introduction}

The behavior of systems encountered in practice is inherently uncertain. One approach to modeling such systems is by representing the uncertainty stochastically via random variables. Examples of stochastic systems appear in areas such as finance \cite{stein2012stochastic}, energy \cite{habib2023stochastic}, and robotics \cite{vahs2024risk}. It is standard to summarize system performance in terms of a sum of stage costs over time, which we call a standard cost. As a result of the stochasticity in the system model, a standard cost is a random variable which cannot be directly optimized \cite[p. 72]{kumar1986stochastic}. As such, the classical \emph{risk-neutral} approach to stochastic optimal control optimizes the expectation of the standard cost \cite{kumar1986stochastic,hernandezlerma2012discrete}. Although optimal in the ``average'' case, this approach may fail to consider rare, yet significant outcomes that may be dangerous to safety-critical systems \cite{majumdar2019how,smith2023on}. An alternative is the \emph{robust} approach for nonstochastic uncertain systems, which optimizes the worst-case setting, but may be too pessimistic \cite{zhou1996robust,majumdar2019how,smith2023on}. These limitations demonstrate the need for a decision-making framework that enables the user to consider both ordinary and more detrimental settings systematically \cite{majumdar2019how,smith2023on,chapman2025riskaware,akella2025risk}. This motivates the \emph{risk-aware} framework, which uses risk measures to extract crucial characteristics from the cost distribution to analyze or optimize \cite{majumdar2019how,shapiro2021lectures,wang2022risk,smith2023on,chapman2025riskaware,royset2025Risk,akella2025risk}. Informally, a \emph{risk measure} is a map from a space of random variables into the extended real line \cite[Sect. 6.3]{shapiro2021lectures}. There are many examples, such as variance, expectation-variance, distributionally robust expectations, exponential utility, and conditional value-at-risk \cite{shapiro2021lectures}. We discuss some of these examples and other ways to measure risk in the context of control theory below.\footnote{While this brief paper does not concern distributionally robust control, we note that optimizing a distributionally robust expectation can be viewed as a risk-aware optimization problem because a distributionally robust expectation is a type of risk measure.}

An early approach to risk-aware control is with exponential utility, in which the expectation of the exponential of a scalar multiple of the standard cost is considered \cite{whittle1981risk}. The exponential utility amplifies the penalty for potential increases in the cost to promote more risk-averse controllers. A method to risk-aware control that measures risk similarly to exponential utility was developed for linear systems with multiplicative noise \cite{ito2018risk}; this method was later generalized to other controller types for linear systems \cite{ito2025weighted}. Also related to the exponential utility approach is mixed $\mathcal{H}_2/\mathcal{H}_\infty$ control \cite{bernstein1989LQG,geromel1995convex,zhang2021policy}. Mixed $\mathcal{H}_2/\mathcal{H}_\infty$ control relates to the problem of optimizing the expected standard cost for a linear system with additive Gaussian noise subject to a robust stability constraint, an example of integrating the average-case and worst-case frameworks \cite{bernstein1989LQG,geromel1995convex,zhang2021policy}.

A more recent method involves optimizing the expected standard cost while constraining the predictive variance of the state cost $x_t^\top Q x_t$  \cite{tsiamis2020riskconstrained}. This method yields an optimal affine state-feedback controller that considers the covariance and skew of the additive noise \cite{tsiamis2020riskconstrained}. A closed-form solution can be determined for a large class of disturbances \cite{tsiamis2020riskconstrained}, rather than just Gaussian disturbances as in the classical exponential utility approach \cite{whittle1981risk}. The predictive variance approach to risk-aware control has attracted attention, as elaborated in \cite{weitobesubmitted}, which shares authorship with the current article. The work \cite{weitobesubmitted} extends predictive variance to a temporally coupled cost, and derives an optimal controller that penalizes temporal and stochastic variabilities of the state trajectory for a linear system with additive noise. In contrast, the current article develops an approach to risk-aware control using a different type of risk assessment that concerns multiplicative noise for a class of nonlinear systems, and does not consider temporal coupling (see Contribution paragraph for details).

Another popular approach involves the use of conditional value-at-risk (CVaR) from finance \cite{shapiro2021lectures,vanParys2016distributionally,kishida2022risk,chapman2023on}. CVaR is conceptually appealing because it represents the expectation of a random variable in the worst $q \cdot 100\%$ of outcomes for a given choice of $q \in (0, 1)$ \cite{shapiro2021lectures}, \cite[Sect. 3.1.5]{wang2022risk}. As $q$ stands for a fraction of the worst cases, CVaR presents an approach to tune quantitatively the level of risk between the risk-neutral and worst-case settings \cite{shapiro2012minimax,majumdar2019how}. Also, CVaR possesses properties that enable it to have an interpretation for and application to distributionally robust control \cite{vanParys2016distributionally,shapiro2021lectures,kishida2022risk}.

However, many of these approaches are designed for linear or affine systems, thereby limiting their applicability to nonlinear systems \cite{whittle1981risk,bernstein1989LQG,geromel1995convex,vanParys2016distributionally,ito2018risk,tsiamis2020riskconstrained,zhang2021policy,kishida2022risk,ito2025weighted,weitobesubmitted}. While some risk-aware controllers for linear or affine systems can be found analytically via algebraic Riccati-like equations, e.g., \cite{whittle1981risk,fujimoto2011optimalfinitetime,ito2018risk,tsiamis2020riskconstrained,kishida2022risk,weitobesubmitted}, methods to compute risk-aware controllers for nonlinear systems often involve dynamic programming without an analytical form. Examples include algorithms to optimize Markov decision processes, with performance quantified via dynamic risk measures \cite{ruszczynski2010riskaverse}, spectral risk measures \cite{bauerle2021minimizing}, and CVaR \cite{bauerle2011markov,chapman2023on}. Dynamic programming can be intractable for large state spaces \cite{bertsekas1996neuro}, especially when extended state spaces are required \cite{bauerle2011markov,bauerle2021minimizing,chapman2023on,smith2023on}. This presents a gap in the tractable risk-aware control of nonlinear systems.

To bridge this gap, we look to the control of systems with \emph{cone-bounded} nonlinearities, which the authors of \cite{scherzinger1982estimation} consider in a risk-neutral, partially observable setting. A system in this class behaves approximately like a nominal linear system, with its deviation from the nominal system being contained in a cone of a certain form \cite{scherzinger1982estimation}. A cone-bounded function on $\mathbb{R}^n$ is also sector-bounded and vice versa (to be discussed in Section \ref{sec:cone_bounded_dynamics}). Examples on $\mathbb{R}$ include saturation functions, such as $f(x) \triangleq x$ if $|x| < x_\text{sat}$, $f(x) \triangleq \operatorname{sgn}(x) x_\text{sat}$ otherwise, with $x_\text{sat} > 0$, and quantizers (to be presented in Sections \ref{sec:quasi_cone_bounded_dynamics} and \ref{sec:numerical_results}).

In this article, we study a \emph{risk-aware} control problem for a class of fully observable systems with more general nonlinearities. We consider nonlinearities that satisfy cone-like bounds using a simple relaxation of cone bounds in a positive-semi-definite form; we call these nonlinearities \emph{quasi-cone-bounded}. The systems have multiplicative noise (e.g., a random matrix multiplies the state) and additive noise, similar to those in \cite{scherzinger1982estimation}. However, only multiplicative noise turns out to play a central role in the risk assessment we consider (to be made precise in Proposition \ref{prop:quadform}). We use a risk assessment similar to the one in \cite{fujimoto2011optimalfinitetime}, which the authors developed for linear systems, since their model includes both types of noise. The authors of \cite{fujimoto2011optimalfinitetime} assess risk in terms of \emph{state variability} via the predictive covariance of $x_{t+1}$, and derive an optimal linear state-feedback controller whose gains depend on multiplicative noise statistics. The authors of \cite{jia2024decentralized} use a risk assessment resembling the one in \cite{fujimoto2011optimalfinitetime}, but without conditioning, in a decentralized, partially observable, linear setting. In contrast, many other approaches to risk-aware control of linear or affine systems use models with multiplicative noise or additive noise but not both \cite{whittle1981risk,vanParys2016distributionally,ito2018risk,tsiamis2020riskconstrained,kishida2022risk,ito2025weighted,weitobesubmitted}.

\textbf{Contribution.} We develop a tractable, theoretically rigorous approach to \emph{risk-aware} control for a class of \emph{nonlinear} systems. Specifically, we consider discrete-time, fully observable systems with \emph{quasi-cone-bounded} nonlinearities subject to multiplicative and additive noise on a finite horizon. Using a performance criterion similar to that in \cite{fujimoto2011optimalfinitetime}, we propose a suboptimal controller $\hat{\mathbf{u}}$ that enjoys \emph{all} of the following features:
\begin{enumerate}
    \item[(a)] \emph{Nonlinear dynamics:} $\hat{\mathbf{u}}$ is designed for a class of nonlinear stochastic systems.
    \item[(b)] \emph{Analytical form:} $\hat{\mathbf{u}}$ has a Riccati-like form similar to that of LQR controllers.
    \item[(c)] \emph{Risk-awareness:} $\hat{\mathbf{u}}$ is designed to reduce variability of the state arising from multiplicative noise.
\end{enumerate}
To develop this contribution, first we show how cone-boundedness is equivalent to other characterizations (Propositions \ref{prop:raymondcone} and \ref{prop:sectorcone}), and then present the extension to quasi-cone-boundedness. Next, we prove that the risk-aware term in the performance criterion enjoys a quadratic representation in the state and control under a general condition on the nonlinearity (Proposition \ref{prop:quadform}). This representation and an upper bound of a quadratic form of a quasi-cone-bounded function (Lemma \ref{lem:quadratic_bound}) underpin $\hat{\mathbf{u}}$, which we present and analyze in Theorem \ref{thm:regulation_upper_bounds}. We also show that a special case of $\hat{\mathbf{u}}$ exactly solves a risk-aware optimal linear control problem (Lemma \ref{lemma12}). We provide an example system with a cone-bounded nonlinearity in which $\hat{\mathbf{u}}$ numerically exhibits effective regulation over a range of nonlinearities. Lastly, we numerically evaluate the viability of our approach in a high-dimensional, quasi-cone-bounded setting.

\textbf{Organization.} Section \ref{sec:problem_formulation} presents the preliminary problem formulation. Cone-bounded and quasi-cone-bounded functions are studied in Sections \ref{sec:cone_bounded_dynamics} and \ref{sec:quasi_cone_bounded_dynamics}, respectively. The risk-aware optimal control problem introduced in Section \ref{sec:problem_formulation} is specialized to systems with quasi-cone-bounded nonlinearities in Section \ref{sec:quasi_cone_bounded_dynamics}. Section \ref{sec:reformulating_variance_suppression} reformulates the risk-aware objective. Section \ref{sec:suboptimal_controller} presents and studies a risk-aware controller $\hat{\mathbf{u}}$ that is suboptimal with respect to the specialized problem. Sections \ref{sec:numerical_results} and \ref{sec:conclusion} provide numerical results and concluding remarks, respectively. We include supporting, existing knowledge in the Appendix (Lemmas \ref{lem:peter_paul_inequality}--\ref{lemma:opt}).

\textbf{Notation.} $\mathbb{N} \triangleq \{1,2,3,\dots\}$, and $\mathbb{R}$ is the real line. $\mathbb{R}^{n \times m}$ is the set of real $n \times m$ matrices. For column vectors $v_1,\dots,v_m$, we define $(v_1, \ldots, v_m) \triangleq [v_1^\top \  \cdots \ v_m^\top]^\top$. The column vector form of $M  \in \mathbb{R}^{n \times m}$ is $\text{vec}(M) \triangleq (M_1,\dots,M_m)$, where $M_j$ is the $j$th column of $M$. $I_n$ is the $n \times n$ identity matrix. $0_{n \times m}$ is the $n \times m$ zero matrix, with $0_n \triangleq 0_{n \times 1}$. For $M \in \mathbb{R}^{n \times n}$ and $v \in \mathbb{R}^n$, we define $(*)^\top M v \triangleq v^\top M v$. $\|\cdot\|$ is the Euclidean norm on $\mathbb{R}^p$ for some $p \in \mathbb{N}$, or the induced norm on $\mathbb{R}^{n \times m}$ (induced by the Euclidean norms on $\mathbb{R}^m$ and $\mathbb{R}^n$) \cite[Def. 11.4.1, p. 841]{bernstein2018scalar}. $\mathbb{S}^n$ is the set of real symmetric $n \times n$ matrices. $\mathbb{S}^n_{\geq}$ ($\mathbb{S}^n_{>}$) is the set of real symmetric positive semidefinite (definite) $n \times n$ matrices. tr$(\cdot)$ denotes trace. $\operatorname{sgn}(\cdot)$ ($\lfloor \cdot \rfloor$, $\lceil \cdot \rceil$) is the sign (floor, ceiling) function. $(\Omega, \mathcal{F}, \mathbb{P})$ is a probability space upon which all random elements in this article are defined. $\mathbb{E}(\cdot)$ ($\mathbb{E}(\cdot|\cdot)$) and cov$(\cdot)$ (cov$(\cdot|\cdot)$) denote (conditional) expectation and covariance, respectively. a.e. means almost everywhere with respect to $\mathbb{P}$. A random matrix $M$ is called square-integrable if $\mathbb{E}(\|M\|^2)$ is finite. Given measurable spaces $(\Omega, \mathcal{F})$ and $(\Omega', \mathcal{F}')$, $\phi : (\Omega, \mathcal{F}) \rightarrow (\Omega', \mathcal{F}')$ means that $\phi : \Omega \rightarrow \Omega'$ is measurable relative to $\mathcal{F}$ and $\mathcal{F}'$ \cite[p. 35]{ash1972probability}. $\mathcal{B}_{\mathbb{R}^n}$ is the Borel $\sigma$-algebra on $\mathbb{R}^n$.

\section{Preliminary Model and Problem}
\label{sec:problem_formulation}

Consider a stochastic system subject to multiplicative and additive disturbances on a horizon $\mathbb{T} \triangleq \{0, \ldots, T - 1\}$ of length $T \in \mathbb{N}$, with an $\mathbb{R}^n$-valued, random initial state $x_0$. Let the random vectors $u_t$ and $v_t$ be $\mathbb{R}^m$-valued control and $\mathbb{R}^n$-valued additive disturbance inputs at time $t \in \mathbb{T} $, respectively. Let the random matrices $\tilde{A}_t$ and $\tilde{B}_t$ be $\mathbb{R}^{n \times n}$-valued and $\mathbb{R}^{n \times m}$-valued multiplicative disturbances at time $t \in \mathbb{T}$, respectively. For each $t \in \mathbb{T}$, the state at time $t+1$ is defined by
\begin{equation}
    \label{eq:dynamics}
    x_{t + 1} = f_t(x_t, u_t) + \tilde{A}_t x_t + \tilde{B}_t u_t + v_t,
\end{equation}
where $f_t : \mathbb{R}^n \times \mathbb{R}^m \to \mathbb{R}^n$ is a Borel-measurable nonlinear function. Denote the history at time $t$ as $h_t \triangleq (x_0, u_0, x_1, \ldots, u_{t - 1}, x_t)$ for each $t \in \mathbb{T}$, with $h_0 \triangleq x_0$. Then, define the $\sigma$-algebra generated by the history as $\mathcal{F}_t \triangleq \sigma(h_t)$. For convenience, we also express the disturbances as $w_t \triangleq (\text{vec}(\tilde{A}_t), \text{vec}(\tilde{B}_t), v_t)$. Hence, the system \eqref{eq:dynamics} can be written as $x_{t+1} = \bar{f}_t(x_t,u_t,w_t)$ for some Borel-measurable function $\bar{f}_t$. We assume that the system has the form of \eqref{eq:dynamics}, although we may not explicitly state this assumption. We do not require full distributional knowledge but use the following:
\begin{assum}[Initial State, Disturbances]
    \label{assum:initial_state}
    (a) $x_0$ is square-integrable;
    (b) $x_0, w_0, \dots, w_{T-1}$ are independent;
    (c) $v_t$ is independent of $[\tilde{A}_t \; \tilde{B}_t]$;
    (d) $v_t$, $\tilde{A}_t$, and $\tilde{B}_t$ are square-integrable and zero-mean; (e) $\mathbb{E}(x_0)$, $\mathbb{E}(x_0 x_0^\top)$, $\mathbb{E}(w_t w_t^\top)$ are known. Items (c)--(e) apply to each $t \in \mathbb{T}$.
\end{assum}

\begin{assum}[Nonlinearities]
    \label{assum:dynamics}
    For each $t \in \mathbb{T}$, $f_t$ in \eqref{eq:dynamics} satisfies the following: if $x$ and $u$ are square-integrable random vectors, then $f_t(x, u)$ is square-integrable.
\end{assum}

\begin{assum}[Controls]
    \label{assum:control}
    For each $t \in \mathbb{T}$, $u_t$ is square-integrable and $\mathcal{F}_t$-measurable, i.e., $u_t \in \mathcal{L}^2(\mathcal{F}_t)$.
\end{assum}

$u_t$ being $\mathcal{F}_t$-measurable for each $t \in \mathbb{T}$ ensures that the controller $\mathbf{u} \triangleq (u_0,\dots,u_{T-1})$ depends causally on the system history. This condition and $x_0,w_0,\dots,w_{T-1}$ being independent are standard conditions for systems of the form $x_{t+1} = \bar{f}_t(x_t,u_t,w_t)$, guaranteeing that $w_t$ is independent of $h_t$ for each $t \in \mathbb{T}$ and that the system is Markovian \cite[Lemma 6.6, p. 18]{kumar1986stochastic}. Assumptions~\ref{assum:initial_state}--\ref{assum:control} ensure that $x_0,\dots,x_T$ are square-integrable and that the objective to be defined in \eqref{eq:objective1} is real-valued (use Proposition \ref{prop:quadform}). The disturbances being zero-mean simplifies analysis and does not restrict us. We can reformulate the dynamics in terms of zero-mean disturbances, such that the reformulated nonlinearity does not violate Assumption \ref{assum:dynamics} (or the stronger quasi-cone-boundedness assumption to be introduced in Section \ref{sec:quasi_cone_bounded_dynamics}).

Under Assumptions \ref{assum:initial_state}--\ref{assum:control}, we define the following risk-aware objective functional to be considered in this work:
\begin{equation}
    \label{eq:objective1}
    \mathcal{J}_{\mathbf{u}} \triangleq \mathbb{E}\left( \textstyle c_T(x_T) + \sum_{t=0}^{T-1} c_t(x_t,u_t)  + \varrho_t \right),
\end{equation}
where $\varrho_t$ and $c_t(x_t,u_t)$ are random variables corresponding to the risk-aware and risk-neutral parts of $\mathcal{J}_{\mathbf{u}}$, respectively. For each $t \in \mathbb{T}$, $\mathbb{E}(\varrho_t)$ is a variance-like cost similar to \cite[eq. (6)]{fujimoto2011optimalfinitetime} with
\begin{equation}
    \label{eq:riskrandomvariable}
    \varrho_t \triangleq \operatorname{tr}\left(Z_t \operatorname{cov}(x_{t+1}  |  \mathcal{F}_t)\right)
\end{equation}
for a design parameter $Z_t \in \mathbb{S}^n_\geq$; intuitively, $\mathbb{E}(\varrho_t)$ assesses risk in terms of the variability of $x_{t+1}$ estimated from the system history up to time $t$. For each $t \in \mathbb{T}$, $c_t$ is
\begin{equation}
    \label{eq:costs}
    c_t(x, u) \triangleq \genfrac{[}{]}{0pt}{0}{x}{u}^\top \genfrac{[}{]}{0pt}{0}{Q_t \;\: S_t}{S_t^\top \;\: R_t} \genfrac{[}{]}{0pt}{0}{x}{u}, \quad  (x,u) \in \mathbb{R}^n \times \mathbb{R}^m,
\end{equation}
where $Q_t \in \mathbb{S}^n_\geq$, $R_t \in \mathbb{S}^m_>$, and $S_t \in \mathbb{R}^{n \times m}$ are chosen so that the matrix in \eqref{eq:costs} is in $\mathbb{S}^{n + m}_\geq$; $c_T(x) \triangleq x^\top Q_T x$ for every $x \in \mathbb{R}^n$, where $Q_T \in \mathbb{S}^n_\geq$. The preliminary optimal control problem is then
\begin{equation}
    \label{eq:mean_field_optimal_control_problem}
    \begin{alignedat}{2}
        &\inf_{\mathbf{u}}       && \mathcal{J}_{\mathbf{u}} \\
        &\operatorname{s.t. } \; && x_{t + 1} = f_t(x_t, u_t) + \tilde{A}_t x_t + \tilde{B}_t u_t + v_t \text{ and}\\
        &                        && \text{$u_t \in \mathcal{L}^2(\mathcal{F}_t)$ for each $t \in \mathbb{T}$; Assumptions \ref{assum:initial_state}--\ref{assum:dynamics}}.
    \end{alignedat}
\end{equation}
This is a well-posed, interpretable problem but difficult to work with in its present form. Assumption \ref{assum:dynamics} permits a general nonlinearity $f_t$ with no immediately discernible structure to simplify the problem. Also, $\varrho_t$ \eqref{eq:riskrandomvariable} is defined in terms of $\operatorname{cov}(x_{t+1}|\mathcal{F}_t)$ instead of $(x_t,u_t)$ alone. While this definition gives $\varrho_t$ a risk-aware interpretation, it differs from that of a typical random stage cost. To circumvent the first issue, we introduce a stricter, structural assumption on $f_t$, permitting it to be \emph{quasi-cone-bounded} (Section \ref{sec:quasi_cone_bounded_dynamics}). To circumvent the second issue, we reformulate $\varrho_t$ (Section \ref{sec:reformulating_variance_suppression}). This article addresses a special case of \eqref{eq:mean_field_optimal_control_problem} when $f_t$ is quasi-cone-bounded and provides an analytical, suboptimal solution to the specialized problem (Section \ref{sec:suboptimal_controller}). Proceeding stepwise, next we present the simpler notion of cone-bounded functions and then the extension to quasi-cone-bounded functions.

\section{Cone-Bounded Functions}
\label{sec:cone_bounded_dynamics}

A function $f_t : \mathbb{R}^n \times \mathbb{R}^m \rightarrow \mathbb{R}^n$ is said to be \emph{cone-bounded} if there exist $A_t \in \mathbb{R}^{n \times n}$, $B_t \in \mathbb{R}^{n \times m}$, and $\Delta_t \in \mathbb{S}^{n + m}_\geq$ such that
\begin{equation}
    \label{eq:cone_bounded_condition}
    \left\|
        f_t(x, u) - \bigl[A_t \;\: B_t\bigr] \genfrac{[}{]}{0pt}{0}{x}{u}
    \right\|^2 \leq  \genfrac{[}{]}{0pt}{0}{x}{u}^\top \Delta_t  \genfrac{[}{]}{0pt}{0}{x}{u}
\end{equation}
for every $(x,u) \in \mathbb{R}^n \times \mathbb{R}^m$. $A_t$, $B_t$, and $\Delta_t$ in \eqref{eq:cone_bounded_condition} are called \emph{cone parameters}. This definition differs mildly from the condition presented in \cite[p. 34, eq. (3a)]{scherzinger1982estimation}, which we restate for easier comparison to \eqref{eq:cone_bounded_condition}: there exist $A_t \in \mathbb{R}^{n \times n}$, $B_t \in \mathbb{R}^{n \times m}$, $a_t \geq 0$, and $b_t \geq 0$ such that
\begin{equation}
    \label{eq:cone_bounded_condition_old}
    \left\|
        f_t(x, u) - \bigl[A_t \;\: B_t\bigr]  \genfrac{[}{]}{0pt}{0}{x}{u}
    \right\| \leq
    a_t \|x\| + b_t \|u\|
\end{equation}
for every $(x,u) \in \mathbb{R}^n \times \mathbb{R}^m$. The conditions in \eqref{eq:cone_bounded_condition} and \eqref{eq:cone_bounded_condition_old} have the same interpretation. That is, $A_t$ and $B_t$ characterize a nominal linear model that approximates $f_t$, i.e., $f_t(x,u) \approx A_t x + B_t u$, whereas $\Delta_t$ (respectively, $(a_t, b_t)$) characterizes how $f_t$ may deviate from this nominal model \cite[p. 34]{scherzinger1982estimation}. Unsurprisingly, these conditions are equivalent, as shown next. We justify our characterization (\ref{eq:cone_bounded_condition}) after the proposition.
\begin{prop}
    \label{prop:raymondcone}
    Let $f_t : \mathbb{R}^n \times \mathbb{R}^m \rightarrow \mathbb{R}^n$ be given. Then, there exist $A_t \in \mathbb{R}^{n \times n}$, $B_t \in \mathbb{R}^{n \times m}$, and $\Delta_t \in \mathbb{S}^{n + m}_\geq$
    such that (\ref{eq:cone_bounded_condition}) holds for every $(x,u) \in \mathbb{R}^n \times \mathbb{R}^m$ if and only if there exist $A_t \in \mathbb{R}^{n \times n}$, $B_t \in \mathbb{R}^{n \times m}$, $a_t \geq 0$, and $b_t \geq 0$ such that (\ref{eq:cone_bounded_condition_old}) holds  for every
    $(x,u) \in \mathbb{R}^n \times \mathbb{R}^m$.
\end{prop}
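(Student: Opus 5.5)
We prove both directions with the same nominal matrices $A_t$ and $B_t$. Only the deviation bound changes, so the whole argument reduces to comparing the quadratic form $z^\top \Delta_t z$, with $z \triangleq (x,u)$, against $(a_t\|x\| + b_t\|u\|)^2$. Throughout, write $e(x,u) \triangleq f_t(x,u) - [A_t \;\: B_t](x,u)$.

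\emph{($\Leftarrow$)} Suppose (\ref{eq:cone_bounded_condition_old}) holds. Squaring both nonnegative sides gives
\[
\|e(x,u)\|^2 \le (a_t\|x\| + b_t\|u\|)^2 .
\]
We bound the cross term by the Peter--Paul (Young) inequality $2ab \le a^2 + b^2$, applied with $a = a_t\|x\|$ and $b = b_t\|u\|$. This yields $(a_t\|x\|+b_t\|u\|)^2 \le 2a_t^2\|x\|^2 + 2b_t^2\|u\|^2$. We then set
\[
\Delta_t \triangleq \begin{bmatrix} 2a_t^2 I_n & 0_{n\times m} \\ 0_{m\times n} & 2b_t^2 I_m \end{bmatrix},
\]
which is diagonal with nonnegative entries and hence lies in $\mathbb{S}^{n+m}_\geq$. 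Its quadratic form is exactly $2a_t^2\|x\|^2 + 2b_t^2\|u\|^2$, so (\ref{eq:cone_bounded_condition}) holds.

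\emph{($\Rightarrow$)} Suppose (\ref{eq:cone_bounded_condition}) holds. Since $\Delta_t$ is symmetric positive semidefinite, $z^\top \Delta_t z \le \|\Delta_t\|\,\|z\|^2 = \|\Delta_t\|(\|x\|^2+\|u\|^2)$, where $\|\Delta_t\|$ is the largest eigenvalue. Taking square roots gives
\[
\|e(x,u)\| \le \|\Delta_t\|^{1/2}\sqrt{\|x\|^2+\|u\|^2} \le \|\Delta_t\|^{1/2}(\|x\|+\|u\|),
\]
using $\sqrt{\alpha^2+\beta^2} \le \alpha+\beta$ for $\alpha,\beta\ge 0$. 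We therefore take $a_t = b_t = \|\Delta_t\|^{1/2} \ge 0$, and (\ref{eq:cone_bounded_condition_old}) holds.

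There is no real obstacle here. The only points needing care are: squaring and taking roots is legitimate because both sides are nonnegative; the diagonal $\Delta_t$ constructed above is genuinely PSD; and the operator-norm bound on a PSD quadratic form is valid. Both directions hold pointwise for every $(x,u)$, so the equivalence follows.
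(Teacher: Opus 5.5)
Your proof is correct and follows the paper's argument essentially step for step. For ($\Rightarrow$) you use the operator-norm bound $\xi^\top \Delta_t \xi \le \|\Delta_t\| \|\xi\|^2$ with $\|\xi\| \le \|x\| + \|u\|$ to get $a_t = b_t = \|\Delta_t\|^{1/2}$; for ($\Leftarrow$) you square and apply the Peter--Paul inequality to get $\Delta_t = 2\,\text{diag}(a_t^2 I_n, b_t^2 I_m)$.
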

\begin{proof}
    \allowdisplaybreaks
    Define $z \triangleq f_t(x, u) - A_t x - B_t u$ and $\xi \triangleq (x,u)$. \emph{Part 1 $(\implies)$:} Using the Cauchy-Schwarz inequality and $\| \Delta_t \xi \| \leq \| \Delta_t\| \|\xi \|$, we have $\| z\|^2 \leq \| \Delta_t \| \|\xi\|^2$. Then, use $\|\xi\|^2 = \|x\|^2 + \|u\|^2 \leq (\|x\| + \|u\|)^2$ and take the square root of $\|z\|^2 \leq \|\Delta_t\| (\|x\| + \|u\|)^2$ to conclude that \eqref{eq:cone_bounded_condition_old} holds with $a_t = b_t = \|\Delta_t\|^{\nicefrac{1}{2}}$. \emph{Part 2 $(\impliedby)$:} Squaring both sides of (\ref{eq:cone_bounded_condition_old}) and applying Lemma \ref{lem:peter_paul_inequality} leads to $\|z\|^2 \leq   2 (a_t \|x\|)^2 + 2 (b_t \|u\|)^2 = \xi^\top \Delta_t \xi$, with $\Delta_t = 2\text{diag}( a_t^2 I_n, b_t^2 I_m)$.
\end{proof}

Despite the conditions in \eqref{eq:cone_bounded_condition}  and \eqref{eq:cone_bounded_condition_old} being equivalent, our condition in \eqref{eq:cone_bounded_condition} can permit increased specificity in the bounding region when $n > 1$ or $m > 1$. Let us consider a simple example. Suppose that $f$ is linear in all but the $i$th state, i.e., $f(x,u) = A x + B u + g(x_i)$, and \eqref{eq:cone_bounded_condition} holds with $\| f(x,u) -A x - B u \|^2 = \|g(x_i)\|^2 \leq \alpha x_i^2$. If $x_i =0$, then $\alpha x_i^2$ evaluates to zero, accurately indicating zero deviation from the nominal linear model. Conversely, the condition in \eqref{eq:cone_bounded_condition_old} would hold with $\| f(x,u) -A x - B u \| = \|g(x_i)\| \leq \sqrt{\alpha} \|x \|$; even if $x_i =0$, $\sqrt{\alpha} \|x \|$ would be nonzero in general, suggesting possible nonlinear behavior when there is none. Our characterization \eqref{eq:cone_bounded_condition} enables us to propose a suboptimal controller that can achieve more effective regulation, as illustrated numerically in Section \ref{sec:numerical_results}.

Our cone-boundedness condition on $\mathbb{R}^n$ is also equivalent to a sector-boundedness condition that closely resembles the one in \cite[Def. 6.2, eq. (6.5)]{khalil2002nonlinear}.
\begin{prop}
    \label{prop:sectorcone}
    Let $f : \mathbb{R}^n \rightarrow \mathbb{R}^n$ be given. Then, there exist $A \in \mathbb{R}^{n \times n}$ and $\Delta \in \mathbb{S}_{\geq}^n$ s.t. $\|f(x) - A x\|^2 \leq x^\top \Delta x$ for every $x \in \mathbb{R}^n$ if and only if there exists $F \triangleq F_{2} - F_{1} \in  \mathbb{S}_{\geq}^n$ s.t. $(f(x) - F_{1} x)^\top (f(x) - F_{2} x) \leq 0$ for every $x \in \mathbb{R}^n$.
\end{prop}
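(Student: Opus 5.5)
The plan is to reduce both conditions to a single algebraic identity, using the midpoint of the sector as the nominal linear model. Given matrices $F_1, F_2 \in \mathbb{R}^{n \times n}$ with $F \triangleq F_2 - F_1$, set $A \triangleq \tfrac{1}{2}(F_1 + F_2)$. Then $F_1 = A - \tfrac{1}{2}F$ and $F_2 = A + \tfrac{1}{2}F$. Writing $y \triangleq f(x) - Ax$, the cross terms cancel:
\begin{equation*}
    (f(x) - F_1 x)^\top (f(x) - F_2 x) = \bigl(y + \tfrac{1}{2}Fx\bigr)^\top \bigl(y - \tfrac{1}{2}Fx\bigr) = \|y\|^2 - \tfrac{1}{4} x^\top F^\top F x .
\end{equation*}
So the sector inequality with $(F_1,F_2)$ holds for every $x$ if and only if $\|f(x) - Ax\|^2 \leq x^\top \bigl(\tfrac{1}{4}F^\top F\bigr) x$ for every $x$. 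Both directions of Proposition \ref{prop:sectorcone} follow from this identity by choosing parameters appropriately.

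\emph{Sector $\implies$ cone.} Take $F = F_2 - F_1 \in \mathbb{S}^n_\geq$, and set $A \triangleq \tfrac{1}{2}(F_1+F_2)$ and $\Delta \triangleq \tfrac{1}{4}F^\top F = \tfrac{1}{4}F^2$. The identity above gives $\|f(x) - Ax\|^2 \leq x^\top \Delta x$ for every $x$. Also $\Delta$ is symmetric positive semidefinite, since $x^\top \Delta x = \tfrac{1}{4}\|Fx\|^2 \geq 0$.

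\emph{Cone $\implies$ sector.} Given $A$ and $\Delta \in \mathbb{S}^n_\geq$, let $\Delta^{1/2} \in \mathbb{S}^n_\geq$ be the unique positive semidefinite square root, obtained from the spectral decomposition of $\Delta$. Define $F_1 \triangleq A - \Delta^{1/2}$ and $F_2 \triangleq A + \Delta^{1/2}$. Then $F = F_2 - F_1 = 2\Delta^{1/2} \in \mathbb{S}^n_\geq$ and $\tfrac{1}{2}(F_1+F_2) = A$. Moreover $\tfrac{1}{4}F^\top F = \Delta^{1/2}\Delta^{1/2} = \Delta$. The identity therefore turns $\|f(x)-Ax\|^2 \leq x^\top \Delta x$ into $(f(x)-F_1x)^\top(f(x)-F_2x) \leq 0$ for every $x$. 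Note that $F_1$ and $F_2$ themselves need not be symmetric; the statement only requires their difference to lie in $\mathbb{S}^n_\geq$.

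The only delicate step is the second direction. There we must produce a \emph{symmetric positive semidefinite} $F$ with $\tfrac{1}{4}F^\top F = \Delta$. A non-symmetric factor, such as a Cholesky factor, would satisfy the quadratic identity but not the requirement $F \in \mathbb{S}^n_\geq$. The symmetric square root resolves this. Everything else is the expansion of the product above, which I will verify by direct computation.
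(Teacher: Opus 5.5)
Your proof is correct. The sector-to-cone direction is exactly the paper's: it uses the same $A=\tfrac12(F_1+F_2)$ and $\Delta=\tfrac14(F_2-F_1)^\top(F_2-F_1)$. The two proofs differ only in the cone-to-sector direction.

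\textbf{The paper's choice.} The paper takes $F_1 \triangleq A-\|\Delta\|^{1/2}I_n$ and $F_2 \triangleq A+\|\Delta\|^{1/2}I_n$, instead of your $A\mp\Delta^{1/2}$. In terms of your identity, this gives $\tfrac14F^\top F=\|\Delta\|I_n$. The sector inequality then follows from $x^\top\Delta x\le\|\Delta\|\,\|x\|^2$. This avoids the matrix square root entirely. It also makes $F$ trivially symmetric positive semidefinite, being a scalar multiple of $I_n$, so the ``delicate step'' you flag never arises. The price is that the resulting sector is isotropic and strictly coarser than the cone whenever $\Delta$ is not a multiple of the identity.

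\textbf{What your choice buys.} Taking $F=2\Delta^{1/2}$ makes the two inequalities exactly equivalent for the paired parameters, not merely one implying the other. Combined with uniqueness of the positive semidefinite square root, your identity shows that $(F_1,F_2)\mapsto\bigl(\tfrac12(F_1+F_2),\tfrac14(F_2-F_1)^2\bigr)$ is a bijection. It maps sector parameters with $F_2-F_1\in\mathbb{S}^n_\geq$ onto cone parameters, and under it the sector condition holds if and only if the cone condition does, for every $f$. This is a sharper statement than the proposition. It also preserves the directional information in $\Delta$, which the paper emphasizes after Proposition~\ref{prop:raymondcone}.
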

\begin{proof}
    Based on techniques from \cite[p. 232]{khalil2002nonlinear}. \emph{Part 1 $(\implies)$:} Choose $F_1 \triangleq A - \|\Delta\|^{\nicefrac{1}{2}} I_n$ and $F_2 \triangleq A +\|\Delta\|^{\nicefrac{1}{2}} I_n$. \emph{Part 2 $(\impliedby)$:} Choose $A \triangleq(F_1 + F_2)/2$ and $\Delta \triangleq (F_2 - F_1)^\top (F_2 - F_1) / 4$.
\end{proof}

\section{Quasi-Cone-Bounded Functions}
\label{sec:quasi_cone_bounded_dynamics}

Having studied how the cone-bounded condition in \eqref{eq:cone_bounded_condition} is related to some existing conditions, we consider a simple relaxation. A function $f_t : \mathbb{R}^n \times \mathbb{R}^m \rightarrow \mathbb{R}^n$ is said to be \emph{quasi-cone-bounded} if there exist $A_t \in \mathbb{R}^{n \times n}$, $B_t \in \mathbb{R}^{n \times m}$, $\Delta_t \in \mathbb{S}^{n + m}_\geq$, and $\delta_t \geq 0$ such that
\begin{equation}
    \label{eq:quasi_cone_bounded_condition}
    \left\|
        f_t(x, u) - f_t^0 - \bigl[A_t \;\ B_t\bigr] \genfrac{[}{]}{0pt}{0}{x}{u}
    \right\|^2 \!\leq\!  \genfrac{[}{]}{0pt}{0}{x}{u}^\top \Delta_t \genfrac{[}{]}{0pt}{0}{x}{u} + \delta_t
\end{equation}
for every $(x,u) \in \mathbb{R}^n \times \mathbb{R}^m$, with $f_t^0 \triangleq f_t(0_n, 0_m)$. We refer to $f_t^0$, $A_t$, $B_t$, $\Delta_t$, and $\delta_t$ as \emph{quasi-cone parameters}. This condition differs from (\ref{eq:cone_bounded_condition}) in just two ways: the inclusion of $f_t^0$ and $\delta_t$.
Let us illustrate how introducing such terms generalizes cone-boundedness using the functions $f_1, f_2, f_3 : \mathbb{R} \rightarrow \mathbb{R}$ shown in Fig. \ref{fig:example_bounding_regions} (see caption for their definitions). Functions similar to $f_1$ appear in quantized feedback control, e.g., see \cite[Figure 1]{fu2005sector}.
Using $z-1 < \lfloor z \rfloor \leq z$ for any $z \in \mathbb{R}$ and $f_1$ being an odd function, one can show that $f_1$ is cone-bounded such that $|f_1(x) - \frac{3}{4}x|^2 \leq \frac{1}{16} x^2$, with $f_1$ and its bounding region plotted in Fig. \ref{fig:example_bounding_regions}(\subref{fig:f1}). The bounding region is a cone in the linear-algebraic sense \cite[p. 278]{bernstein2018scalar}, motivating the use of the term cone-bounded. It can be verified from the definition \eqref{eq:cone_bounded_condition} that a cone-bounded function must be zero and Lipschitz continuous at the origin; we say that $f$ is Lipschitz at a point $x_0 \in \operatorname{domain}(f)$ if there exists an $L \geq 0$ such that $\|f(x) - f(x_0)\| \leq L \|x - x_0\|$ for every $x \in \operatorname{domain}(f)$. $f_2$ and $f_3$ demonstrate how quasi-cone-boundedness relaxes these restrictions.

$f_2$ cannot be cone-bounded, since $f_2$ is nonzero at the origin. However, $f_2$ is quasi-cone-bounded such that $|f_2(x) - f_2(0) - \frac{3}{4}x|^2 \leq \frac{1}{16} x^2$, which can be derived using the previous cone bound for $f_1$. Since $f_2(0)$ is nonzero, the bounding region is a \emph{shifted} cone (Fig. \ref{fig:example_bounding_regions}(\subref{fig:f2})).

Lastly, $f_3$ illustrates a benefit of the $\delta_t$ term in \eqref{eq:quasi_cone_bounded_condition}. While $f_3(0) = 0$, $f_3$ is not Lipschitz at the origin $x_0 = 0$, hence cannot be cone-bounded. Yet, it can be shown that $f_3$ is quasi-cone-bounded such that $|f_3(x)|^2 \leq a x^2 + b$ for some values of $a > 0$ and $\delta_t = b > 0$. Fig. \ref{fig:example_bounding_regions}(\subref{fig:f3}) depicts two corresponding bounding regions. In the example of $f_3$, it turns out that $b$ can never be zero; in the case of $b = 0$ and $a > 0$, there exists an $x$ such that $|f_3(x)|^2 > a x^2 $, e.g., $x = (2a)^{-1}$. Fig. \ref{fig:example_bounding_regions}(\subref{fig:f3}) exemplifies how the $\delta_t $ term can characterize a bounding region with a nonzero ``width'' at the origin and how such a region can bound a function like $f_3$ that behaves irregularly near the origin. Having motivated the extension to quasi-cone-boundedness, we consider the following assumption:

\begin{figure}[t]
    \centering
    \begin{subfigure}[t]{0.225\textwidth}
        \centering
        \includegraphics[width=\textwidth]{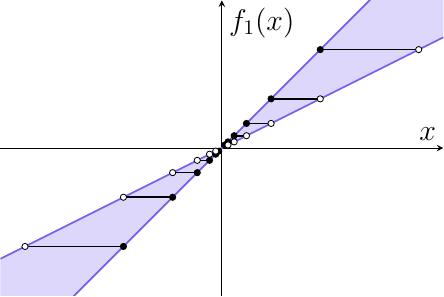}
        \caption{}
        \label{fig:f1}
    \end{subfigure}%
    \hfill%
    \begin{subfigure}[t]{0.225\textwidth}
        \centering
        \includegraphics[width=\textwidth]{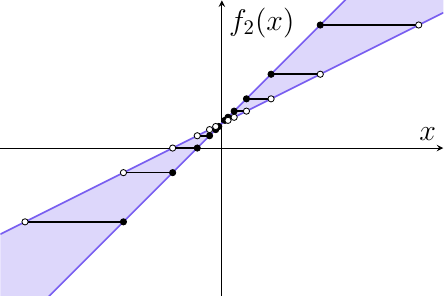}
        \caption{}
        \label{fig:f2}
    \end{subfigure}%
    \hfill%
    \begin{subfigure}[t]{0.225\textwidth}
        \centering
        \includegraphics[width=\textwidth]{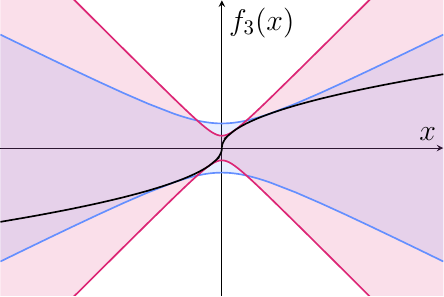}
        \caption{}
        \label{fig:f3}
    \end{subfigure}
    \caption{Example (quasi-)cone bounding regions for functions (a) $f_1(x) \triangleq \operatorname{sgn}(x) 2^{\lfloor\log_2|x|\rfloor}$ for $x \neq 0$, with $f_1(0) \triangleq 0$, (b) $f_2(x) \triangleq f_1(x) + 1$, and (c) $f_3(x) \triangleq \operatorname{sgn}(x) \sqrt{|x|}$.}
    \label{fig:example_bounding_regions}
\end{figure}

\begin{assum}[Quasi-cone-bounded nonlinearities]
    \label{assum:quasi-cone-bounded}
    For each $t \in \mathbb{T}$, $f_t$ in (\ref{eq:dynamics}) is quasi-cone-bounded.
\end{assum}

It can be verified that Assumption \ref{assum:quasi-cone-bounded} implies Assumption \ref{assum:dynamics}. Hence, we propose a well-posed optimal control problem with the stricter assumption of quasi-cone-boundedness replacing Assumption \ref{assum:dynamics}:
\begin{equation}
    \label{eq:main_optimal_control_problem}
    \begin{alignedat}{2}
        &\inf_\mathbf{u}         && \mathcal{J}_{\mathbf{u}} \\
        &\operatorname{s.t. } \; && x_{t + 1} = f_t(x_t, u_t) +  \tilde{A}_t x_t + \tilde{B}_t u_t + v_t \text{ and}\\
        &                        && \text{$u_t \in \mathcal{L}^2(\mathcal{F}_t)$ for each $t \in \mathbb{T}$; Assumptions \ref{assum:initial_state}, \ref{assum:quasi-cone-bounded}}.
    \end{alignedat}
\end{equation}
We denote the optimal value of \eqref{eq:main_optimal_control_problem} as $\mathcal{J}^*$. $\mathcal{J}_\mathbf{u}$ \eqref{eq:objective1} has risk-aware costs, which have not been considered in tandem with systems specifically having cone-bounded or quasi-cone-bounded nonlinearities to the authors' knowledge. After reformulating $\mathcal{J}_\mathbf{u}$ in Section \ref{sec:reformulating_variance_suppression}, we provide an analytical, suboptimal solution to \eqref{eq:main_optimal_control_problem} in Section \ref{sec:suboptimal_controller}.

\section{Reformulation of the Risk-Aware Objective}
\label{sec:reformulating_variance_suppression}

We specified the risk-aware cost $\mathbb{E}(\varrho_t)$ similarly to \cite{fujimoto2011optimalfinitetime}, but we cannot use their approach directly to tackle our problem since the system we are considering may be nonlinear. The authors of \cite{fujimoto2011optimalfinitetime} assume linear dynamics and feedback, and then define a map $\mathcal{A}_L : \mathbb{S}^n \to \mathbb{S}^n$ in terms of the dynamics and feedback matrices, which they use to reformulate the performance criterion \cite{fujimoto2011optimalfinitetime}. However, we neither assume linear dynamics nor linear feedback in \eqref{eq:main_optimal_control_problem}, and thus cannot similarly define $\mathcal{A}_L$. Fortunately, using elementary probability theory, we can reformulate $\varrho_t$ \eqref{eq:riskrandomvariable} into a convex quadratic expression in $(x_t,u_t)$ under Assumptions \ref{assum:initial_state}--\ref{assum:control}. This reformulation does \emph{not} require $f_t$ to be quasi-cone-bounded. We present a supporting lemma (Lemma \ref{lemma:support1}) and then the reformulation (Proposition \ref{prop:quadform}), in which $\xi_t \triangleq (x_t,u_t)$ and $\tilde{M}_t \triangleq [\tilde{A}_t \; \tilde{B}_t]$ for brevity. We also define $\Sigma_t \triangleq \text{cov}(v_t) = \mathbb{E}(v_t v_t^\top)$.
\begin{lem}
    \label{lemma:support1}
    Let the system \eqref{eq:dynamics} satisfy Assumptions \ref{assum:initial_state}--\ref{assum:control}. Then, for each $t \in \mathbb{T}$, $\mathbb{E}(x_{t+1} | \mathcal{F}_t) \overset{\mathrm{a.e.}}{=} f_t(\xi_t)$ and
    \begin{equation*}
        \begin{aligned}
            \mathbb{E}(x_{t+1}^\top P x_{t+1} | \mathcal{F}_t) & \overset{\mathrm{a.e.}}{=} \xi_t^\top \mathbb{E}(\tilde{M}_t^\top P\tilde{M}_t) \xi_t + \operatorname{tr}(P \Sigma_t) \\ & \hphantom{\overset{\mathrm{a.e.}}{=}} + f_t(\xi_t) ^\top P f_t(\xi_t)
        \end{aligned}
    \end{equation*}
    for any $P \in \mathbb{S}^n_{\geq}$.
\end{lem}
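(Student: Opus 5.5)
The plan is to write $x_{t+1} = f_t(\xi_t) + \tilde{M}_t \xi_t + v_t$ and use two standard facts. First, $\xi_t = (x_t,u_t)$ is $\mathcal{F}_t$-measurable: $x_t$ is a component of $h_t$, and $u_t$ is $\mathcal{F}_t$-measurable by Assumption~\ref{assum:control}. Second, $w_t = (\text{vec}(\tilde{A}_t),\text{vec}(\tilde{B}_t),v_t)$ is independent of $\mathcal{F}_t$. The latter holds because $h_t$ is a measurable function of $x_0,w_0,\dots,w_{t-1}$ (by induction, using the causal controls), and Assumption~\ref{assum:initial_state}(b) makes this collection independent of $w_t$. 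This is the fact cited from \cite[Lemma 6.6]{kumar1986stochastic}. With these in hand, I would compute conditional expectations of functions of the form $g(\xi_t,w_t)$ via the "freezing" (substitution) lemma: $\mathbb{E}(g(\xi_t,w_t)|\mathcal{F}_t) \overset{\mathrm{a.e.}}{=} G(\xi_t)$, where $G(\xi) \triangleq \mathbb{E}(g(\xi,w_t))$. This requires $g(\xi_t,w_t)$ to be integrable.

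\emph{Integrability.} Square-integrability of $x_t$, $u_t$, $f_t(\xi_t)$, $\tilde{M}_t$ and $v_t$ follows inductively from Assumptions~\ref{assum:initial_state}--\ref{assum:control}. One product needs care, namely $\tilde{M}_t\xi_t$. Here $\mathbb{E}\|\tilde{M}_t\xi_t\|^2 \le \mathbb{E}\|\tilde{M}_t\|^2\,\mathbb{E}\|\xi_t\|^2 < \infty$ by independence of $\tilde{M}_t$ and $\xi_t$. Hence $x_{t+1}$ is square-integrable, and every term below is integrable by Cauchy--Schwarz. I expect this step, together with justifying the independence of $w_t$ from $\mathcal{F}_t$, to be the main technical care point; the algebra itself is routine.

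\emph{Conditional mean.} Applying the freezing lemma with $g(\xi,w) = f_t(\xi) + \tilde{M}\xi + v$ gives $G(\xi) = f_t(\xi) + \mathbb{E}(\tilde{M}_t)\xi + \mathbb{E}(v_t) = f_t(\xi)$, since the disturbances are zero-mean by Assumption~\ref{assum:initial_state}(d).

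\emph{Conditional second moment.} Take $g(\xi,w) = (f_t(\xi)+\tilde{M}\xi+v)^\top P (f_t(\xi)+\tilde{M}\xi+v)$ and expand into six terms. For fixed $\xi$:
\begin{itemize}
\item the term $f_t(\xi)^\top P f_t(\xi)$ is deterministic;
\item the cross terms $2f_t(\xi)^\top P\tilde{M}_t\xi$ and $2f_t(\xi)^\top P v_t$ have zero mean, by zero-mean disturbances;
\item the term $2\xi^\top\tilde{M}_t^\top P v_t$ has zero mean, by independence of $v_t$ and $\tilde{M}_t$ (Assumption~\ref{assum:initial_state}(c)) and $\mathbb{E}(v_t)=0$;
\item the term $\xi^\top\tilde{M}_t^\top P\tilde{M}_t\xi$ has mean $\xi^\top\mathbb{E}(\tilde{M}_t^\top P\tilde{M}_t)\xi$;
\item the term $v_t^\top P v_t$ has mean $\operatorname{tr}(P\,\mathbb{E}(v_tv_t^\top)) = \operatorname{tr}(P\Sigma_t)$.
\end{itemize}
Summing these and substituting $\xi = \xi_t$ yields the stated identity a.e.
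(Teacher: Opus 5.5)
Your proposal is correct and follows essentially the same route as the paper: write $x_{t+1} = f_t(\xi_t) + \tilde{M}_t\xi_t + v_t$, use the $\mathcal{F}_t$-measurability of $\xi_t$ and the independence of $w_t$ from $h_t$, then expand the quadratic into six terms and use zero means together with the independence of $v_t$ and $\tilde{M}_t$. The only difference is packaging: you apply a single freezing (substitution) lemma and compute unconditional expectations for fixed $\xi$, whereas the paper uses linearity of conditional expectation term by term with Lemmas~\ref{lem:predictive_variance_helper3}--\ref{lem:predictive_variance_helper2}, which are special cases of that lemma.
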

\begin{proof}
    \emph{Part 1:} Since $f_t(\xi_t)$, $\tilde{M}_t \xi_t$, and $v_t$ are integrable random vectors due to Assumptions \ref{assum:initial_state}--\ref{assum:control}, $\mathbb{E} (x_{t+1} |  \mathcal{F}_t ) \overset{\mathrm{a.e.}}{=} \mathbb{E}( f_t(\xi_t) | \mathcal{F}_t ) + \mathbb{E}(\tilde{M}_t \xi_t | \mathcal{F}_t) + \mathbb{E}(v_t | \mathcal{F}_t)$ \cite[Thm. 6.5.2(a)]{ash1972probability}. Since $\xi_t$ is $\mathcal{F}_t$-measurable and integrable, and since $\tilde{M}_t$ is independent of $h_t$ and zero-mean, $\mathbb{E}(\tilde{M}_t \xi_t | \mathcal{F}_t) \overset{\mathrm{a.e.}}{=} 0_{n}$ (Lemma \ref{lem:predictive_variance_helper3}). Also, $v_t$ is independent of $h_t$ and zero-mean, thus $\mathbb{E}(v_t | \mathcal{F}_t) \overset{\mathrm{a.e.}}{=} 0_{n }$. Lastly, $\mathbb{E}(f_t(\xi_t)|\mathcal{F}_t) \overset{\mathrm{a.e.}}{=} f_t(\xi_t)$ follows from $f_t(\xi_t)$ being $\mathcal{F}_t$-measurable and integrable. \emph{Part 2:} The second statement can be shown by expanding $x_{t+1}^\top P x_{t+1}$ into six terms, verifying that the terms are integrable, and calculating their conditional expectations. The derivation uses Assumptions \ref{assum:initial_state}--\ref{assum:control}, Lemmas \ref{lem:predictive_variance_helper3}--\ref{lem:predictive_variance_helper2}, and \cite[Thm. 6.5.2(a)]{ash1972probability}. For example, one term is $2 f_t(\xi_t)^\top P \tilde{M}_t \xi_t$, and $\mathbb{E}(2 f_t(\xi_t)^\top P \tilde{M}_t \xi_t|\mathcal{F}_t) \overset{\mathrm{a.e.}}{=} 0$ can be calculated using Lemma \ref{lem:predictive_variance_helper2}.
\end{proof}

\begin{prop}
    \label{prop:quadform}
    Let the system \eqref{eq:dynamics} satisfy Assumptions \ref{assum:initial_state}--\ref{assum:control}. Then, for each $t \in \mathbb{T}$, $\varrho_t$ \eqref{eq:riskrandomvariable} admits the following representation: $\varrho_t = \xi_t^\top \mathbb{E}(\tilde{M}_t^\top Z_t \tilde{M}_t) \xi_t + \operatorname{tr}(Z_t \Sigma_t) $ a.e.
\end{prop}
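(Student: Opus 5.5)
The plan is to express the conditional covariance as a conditional second moment minus the outer product of the conditional mean, and then read off both pieces from Lemma \ref{lemma:support1}. Define $\operatorname{cov}(x_{t+1}|\mathcal{F}_t) \triangleq \mathbb{E}(x_{t+1}x_{t+1}^\top|\mathcal{F}_t) - \mathbb{E}(x_{t+1}|\mathcal{F}_t)\mathbb{E}(x_{t+1}|\mathcal{F}_t)^\top$. Square-integrability of $x_{t+1}$ ensures that all entries of $x_{t+1}x_{t+1}^\top$ are integrable. This follows from Assumptions \ref{assum:initial_state}--\ref{assum:control} together with the fact, already used in the proof of Lemma \ref{lemma:support1}, that $f_t(\xi_t)$, $\tilde{M}_t\xi_t$ and $v_t$ are square-integrable.

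First, use linearity of trace and of conditional expectation \cite[Thm. 6.5.2(a)]{ash1972probability}. Since $Z_t$ is deterministic,
\[
\operatorname{tr}\bigl(Z_t\mathbb{E}(x_{t+1}x_{t+1}^\top|\mathcal{F}_t)\bigr) \overset{\mathrm{a.e.}}{=} \mathbb{E}(\operatorname{tr}(Z_t x_{t+1}x_{t+1}^\top)|\mathcal{F}_t) = \mathbb{E}(x_{t+1}^\top Z_t x_{t+1}|\mathcal{F}_t).
\]
This step is a finite sum of entrywise conditional expectations, so linearity applies directly. Then apply the second statement of Lemma \ref{lemma:support1} with $P = Z_t \in \mathbb{S}^n_\geq$. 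This gives
\[
\xi_t^\top \mathbb{E}(\tilde{M}_t^\top Z_t\tilde{M}_t)\xi_t + \operatorname{tr}(Z_t\Sigma_t) + f_t(\xi_t)^\top Z_t f_t(\xi_t) \quad \text{a.e.}
\]

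Second, use the first statement of Lemma \ref{lemma:support1}, $\mathbb{E}(x_{t+1}|\mathcal{F}_t) \overset{\mathrm{a.e.}}{=} f_t(\xi_t)$. It follows that $\operatorname{tr}\bigl(Z_t\mathbb{E}(x_{t+1}|\mathcal{F}_t)\mathbb{E}(x_{t+1}|\mathcal{F}_t)^\top\bigr) \overset{\mathrm{a.e.}}{=} f_t(\xi_t)^\top Z_t f_t(\xi_t)$. Subtracting this from the first expression cancels the $f_t$ term and yields the claimed representation of $\varrho_t$. Equality holds a.e. because a finite union of null sets is null.

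No step here is a real obstacle, since the substantive work sits in Lemma \ref{lemma:support1}. The only care needed is in the trace/conditional-expectation interchange, and in making sure the versions of the conditional expectations are handled consistently so that the final equality holds almost everywhere. The proof is a short computation.
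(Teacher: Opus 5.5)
Your proposal is correct and follows the paper's proof. The paper cites Lemma \ref{lem:predictive_variance_helper1} to write $\varrho_t$ as $\mathbb{E}(x_{t+1}^\top Z_t x_{t+1}|\mathcal{F}_t) - \mathbb{E}(x_{t+1}|\mathcal{F}_t)^\top Z_t \mathbb{E}(x_{t+1}|\mathcal{F}_t)$ a.e. and then substitutes both statements of Lemma \ref{lemma:support1}, which is what you do, with your trace/conditional-expectation interchange simply proving that appendix lemma inline.
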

\begin{proof}
    $x_{t+1}$ is square-integrable under Assumptions \ref{assum:initial_state}--\ref{assum:control}, $\mathcal{F}_t$ is a sub-$\sigma$-algebra of $\mathcal{F}$, and $Z_t \in \mathbb{S}^n_\geq$, thus we have $\varrho_t\stackrel{\mathclap{\normalfont\mbox{\tiny a.e.}}}{=} \mathbb{E}(x_{t+1}^\top Z_t x_{t+1} |\mathcal{F}_t) - \mathbb{E}(x_{t+1} | \mathcal{F}_t)^\top Z_t\mathbb{E}(x_{t+1} | \mathcal{F}_t)$ (Lemma \ref{lem:predictive_variance_helper1}). This expression appears in \cite[p. 12606]{fujimoto2011optimalfinitetime} but for a linear system and with conditioning on $x_t$; we consider a nonlinear system and condition on $h_t$ instead. Then, substituting the expressions for $\mathbb{E}(x_{t+1}^\top Z_t x_{t+1} |\mathcal{F}_t)$ and $\mathbb{E}(x_{t+1} | \mathcal{F}_t)$ from Lemma \ref{lemma:support1} yields the reformulation of $\varrho_t$ in Proposition \ref{prop:quadform}.
\end{proof}

Proposition \ref{prop:quadform} simplifies the risk-aware optimal control problem \eqref{eq:main_optimal_control_problem}, since it allows us to rewrite the objective under Assumptions \ref{assum:initial_state}--\ref{assum:control} as $\mathcal{J}_{\mathbf{u}} = \mathbb{E}(c_T(x_T) + \sum_{t=0}^{T-1} c_t^Z(x_t,u_t)) \in \mathbb{R}$, where $c_t^Z(\xi) \triangleq \xi^\top G_t^Z \xi + \operatorname{tr}(Z_t \Sigma_t)$ for every $\xi = (x,u) \in \mathbb{R}^n \times \mathbb{R}^m$ and
\begin{equation}
    \label{eq:Gtz}
    G_t^Z \triangleq \genfrac{[}{]}{0pt}{0}{Q_t \;\;\, S_t}{S_t^\top \;\: R_t} + \mathbb{E}\bigl(\bigl[\tilde{A}_t \; \tilde{B}_t\bigr]^\top Z_t \bigl[\tilde{A}_t \; \tilde{B}_t\bigr]\bigr)
\end{equation}
for every $t \in \mathbb{T}$. Hence, multiplicative noise, but not additive noise, will play a role in the proposed controller (Section \ref{sec:suboptimal_controller}). Since $\mathcal{J}_{\mathbf{u}}$ equals the expected sum of some (convex) quadratic costs, the problem \eqref{eq:main_optimal_control_problem} reduces to a risk-neutral alternative obtained by modifying the matrices $Q_t$, $R_t$, and $S_t$ of the original cost function \eqref{eq:costs}. Tackling risk-aware problems by reducing them to risk-neutral ones is an oft-used technique \cite[Sect. 4.2.3]{wang2022risk}, \cite{tsiamis2020riskconstrained}, \cite{weitobesubmitted}. The equivalence of \eqref{eq:main_optimal_control_problem} to a risk-neutral alternative does \emph{not} negate the utility of the original risk-aware formulation \cite{majumdar2019how}. Merely solving a standard risk-neutral problem at the start begs the question of how to compensate systematically for the increased variability in the state arising from the multiplicative disturbances. Fortunately, the problem \eqref{eq:main_optimal_control_problem} penalizes this variability by design via the variance-like cost $\mathbb{E}(\varrho_t)$.

\section{Analytical, Suboptimal Controller}
\label{sec:suboptimal_controller}

Having reformulated $\mathcal{J}_\mathbf{u}$, we offer an analytical, suboptimal solution to \eqref{eq:main_optimal_control_problem}, resembling the form of LQR controllers, enabled by the quasi-cone-bounded structure of the nonlinearities. We purposefully determine a suboptimal solution (with desirable attributes, see below) rather than an optimal one. The optimal control of a general nonlinear system is challenging due to lack of convenient structure. Taking inspiration from \emph{Anna Karenina}, ``all linear systems are alike; every type of nonlinear system is nonlinear in its own way'' \cite{tolstoy2014anna}. A key issue is that the nonlinear dynamics \eqref{eq:dynamics} impose equality constraints that render (\ref{eq:main_optimal_control_problem}) nonconvex, precluding the use of convex optimization to compute an optimal controller. Furthermore, applying dynamic programming to compute an optimal controller can run into Bellman's ``curse of dimensionality'' \cite{bertsekas1996neuro}. A common technique to alleviate this curse is to approximate the optimal cost-to-go functions using a family of parameterized functions, a method of approximate dynamic programming \cite{bertsekas1996neuro,bertsekas2019reinforcement}. Using a quadratic parameterization of cost-to-go random variables, we analyze our proposed suboptimal solution $\hat{\mathbf{u}}$ to \eqref{eq:main_optimal_control_problem} in Theorem \ref{thm:regulation_upper_bounds}. A key feature of $\hat{\mathbf{u}}$ is its explicit dependence on the quasi-cone parameters of $f_t$ in the bound \eqref{eq:quasi_cone_bounded_condition}, instead of depending on $f_t$ exactly (Theorem \ref{thm:regulation_upper_bounds}). Considering a class of nonlinearities obeying a particular bound, instead of one specific nonlinearity, is standard in classical nonlinear control, e.g., in stability analysis concerning sector-bounded nonlinearities \cite[p. 264]{khalil2002nonlinear}. A similar concept is that of relaxation in convex optimization, where an equality constraint (analogously, a precise form of $f_t$) is relaxed to an inequality constraint (analogously, a bound on $f_t$) \cite[p. 654]{boyd2004convex}. Advantages of the suboptimal solution $\hat{\mathbf{u}}$ include its ease of computation and familiar analytical form, the latter enhancing its interpretability.

Before presenting $\hat{\mathbf{u}}$ in Theorem \ref{thm:regulation_upper_bounds}, we present an intermediary result: a quadratic upper bound of a quadratic form of a quasi-cone-bounded function. This result adapts the matrix bounds from \cite[p. 51]{scherzinger1982estimation} using a generalized Peter--Paul inequality to enable the analysis of $\hat{\mathbf{u}}$ in Theorem \ref{thm:regulation_upper_bounds}. The result uses $\operatorname{tr}(\cdot)$ or $\|\cdot\|$ to measure the size of a matrix $P \in \mathbb{S}_\geq^n$. We use $\operatorname{tr}(\cdot)$ in an example with $n = 1000$ due to its computational advantage (Section \ref{sec:numerical_results}).

\begin{lem}
    \label{lem:quadratic_bound}
    Let $f : \mathbb{R}^n \times \mathbb{R}^m \rightarrow \mathbb{R}^n$ be quasi-cone-bounded, with parameters $f^0 \triangleq f(0_n,0_m)$, $A \in \mathbb{R}^{n\times n}$, $B \in \mathbb{R}^{n \times m}$, $\Delta \in \mathbb{S}_\geq^{n+m}$, and $\delta \geq 0$ in \eqref{eq:quasi_cone_bounded_condition}. Given $P \in \mathbb{S}_\geq^n$ and $q \in \mathbb{R}^n$, define $\psi(x) \triangleq x^\top P x + 2 x^\top q$ on $\mathbb{R}^n$. Let $\varphi(\cdot)$ be either $\operatorname{tr}(\cdot)$ or $\|\cdot\|$. Given $\alpha > 0$ and $\beta > 0$, define $ \lambda \triangleq (1 + \alpha^{-1}) \varphi(P) + \beta^{-1}$, \allowdisplaybreaks
    \begin{align*}
        \begin{split}
            \bar{P} &\triangleq (1 + \alpha) [A \;\: B]^\top P [A \;\: B] + \lambda \Delta,
        \end{split} \\
            \bar{q}
        &\triangleq [A \;\: B]^\top ((1 + \alpha) P f^0 + q),  \quad  \text{and} \\
        \begin{split}
            \bar{r} &\triangleq (1 + \alpha) f^{0\top} P f^0 + \beta q^\top q + 2  q^\top f^{0} + \lambda \delta .
        \end{split}
    \end{align*}
    Then, $\psi(f(\xi)) \leq \xi^\top \bar{P} \xi + 2 \xi^\top \bar{q} + \bar{r}$ for every $\xi \triangleq (x,u) \in \mathbb{R}^n\times\mathbb{R}^m$, with $\bar{P} \in \mathbb{S}_\geq^{n+m}$.
\end{lem}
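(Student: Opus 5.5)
Write $f(\xi) = f^0 + M\xi + z$, where $M \triangleq [A \;\: B]$ and $z \triangleq f(\xi) - f^0 - M\xi$. By \eqref{eq:quasi_cone_bounded_condition}, $\|z\|^2 \leq \xi^\top \Delta \xi + \delta$. Expanding $\psi(f(\xi))$ gives
\begin{equation*}
\psi(f(\xi)) = (f^0 + M\xi)^\top P (f^0 + M\xi) + 2 (f^0+M\xi)^\top P z + z^\top P z + 2 (f^0 + M\xi)^\top q + 2 z^\top q .
\end{equation*}
The plan is to leave every term that does not involve $z$ exactly as it is. The $z$-dependent terms are then absorbed using the generalized Peter--Paul inequality (Lemma \ref{lem:peter_paul_inequality}), applied once with parameter $\alpha$ and once with parameter $\beta$.

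\textbf{Cross term with $P$.} Since $P \in \mathbb{S}^n_\geq$, write $P = P^{1/2}P^{1/2}$ and apply Peter--Paul to the vectors $P^{1/2}(f^0+M\xi)$ and $P^{1/2}z$:
\begin{equation*}
2 (f^0+M\xi)^\top P z \leq \alpha (f^0+M\xi)^\top P (f^0+M\xi) + \alpha^{-1} z^\top P z .
\end{equation*}
Together with the two terms already present, the quadratic part becomes $(1+\alpha)(f^0+M\xi)^\top P(f^0+M\xi) + (1+\alpha^{-1}) z^\top P z$.

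\textbf{Cross term with $q$.} Apply Peter--Paul again: $2z^\top q \leq \beta q^\top q + \beta^{-1}\|z\|^2$.

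\textbf{Bounding the $z$-terms.} Use $z^\top P z \leq \|P\|\|z\|^2 \leq \operatorname{tr}(P)\|z\|^2$; both inequalities hold because $P \succeq 0$, so the largest eigenvalue is at most the sum of the eigenvalues. Hence $z^\top P z \leq \varphi(P)\|z\|^2$ for either choice of $\varphi$. All $z$-dependent terms are therefore bounded by
\begin{equation*}
\bigl[(1+\alpha^{-1})\varphi(P) + \beta^{-1}\bigr]\|z\|^2 = \lambda \|z\|^2 \leq \lambda(\xi^\top\Delta\xi + \delta),
\end{equation*}
where the last step uses the cone bound and $\lambda > 0$.

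\textbf{Collecting terms.} Expanding $(1+\alpha)(f^0+M\xi)^\top P(f^0+M\xi) + 2(f^0+M\xi)^\top q$ gives:
\begin{itemize}[label={}]
\item quadratic part $(1+\alpha)\xi^\top M^\top P M\xi$;
\item linear part $2\xi^\top M^\top((1+\alpha)Pf^0 + q)$;
\item constant part $(1+\alpha)f^{0\top}Pf^0 + 2q^\top f^0$.
\end{itemize}
Adding $\lambda\xi^\top\Delta\xi$, $\lambda\delta$ and $\beta q^\top q$ reproduces exactly $\bar P$, $\bar q$ and $\bar r$.

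Finally, $\bar P$ is symmetric, and it is positive semidefinite because it is the sum of $(1+\alpha)M^\top PM \succeq 0$ and $\lambda\Delta \succeq 0$.

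The main obstacle is the matrix-weighted Peter--Paul step. Using the square root $P^{1/2}$ handles it, provided Lemma \ref{lem:peter_paul_inequality} is stated in vector form $2a^\top b \leq \alpha\|a\|^2 + \alpha^{-1}\|b\|^2$. The bound $\|P\| \leq \operatorname{tr}(P)$ for positive semidefinite $P$ is the other point to check.
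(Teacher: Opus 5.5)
Your proof is correct and follows essentially the same route as the paper: write $f(\xi)=f^0+M\xi+\varepsilon$, then apply the generalized Peter--Paul inequality with $\alpha$ to the $P$-weighted terms and with $\beta$ to $2\varepsilon^\top q$. After that, bound $\varepsilon^\top P\varepsilon\le\varphi(P)\|\varepsilon\|^2$, invoke the quasi-cone bound, and collect terms.

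The step you flag as the main obstacle is not one. Lemma~\ref{lem:peter_paul_inequality} is already stated with a weight $P\in\mathbb{S}^n_\geq$, and its second part gives your combined bound on $(\varepsilon+f^0+M\xi)^\top P(\varepsilon+f^0+M\xi)$ directly, so the $P^{1/2}$ detour is unnecessary, though harmless.
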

\begin{proof}
    \allowdisplaybreaks
    Denote $M \triangleq [A \; B]$ and $\varepsilon \triangleq f(\xi) - (f^0 + M \xi )$, so \eqref{eq:quasi_cone_bounded_condition} can be written as $ \|\varepsilon\|^2 \leq \xi^\top \Delta \xi + \delta$. Substituting $f(\xi) = \varepsilon + f^0 + M \xi $ into $\psi(f(\xi))$ yields $\psi(f(\xi)) = \text{Term1} + \text{Term2} + 2(f^0 + M \xi )^\top q$, where $\text{Term1} \triangleq  (\varepsilon + f^0 + M \xi )^\top P(\varepsilon + f^0 + M \xi )$ and $\text{Term2} \triangleq 2 \varepsilon^\top q$ depend on $f$ via $\varepsilon$. So, we seek to bound Term1 and Term2. Applying Lemma \ref{lem:peter_paul_inequality}, we derive $\text{Term1}  \leq (1 + \alpha^{-1}) \varepsilon^\top P \varepsilon  + (1 + \alpha)(*)^\top P (f^0 + M \xi )$ and $\text{Term2} \leq \beta^{-1} \varepsilon^\top\varepsilon + \beta q ^\top q$, respectively. Note that $\varepsilon^\top P \varepsilon \leq \varphi(P) \|\varepsilon\|^2$. Applying the above inequalities to the expansion of $\psi(f(\xi))$ and grouping like terms leads to $\psi(f(\xi)) \leq \xi^\top \bar{P} \xi + 2 \xi^\top \bar{q} + \bar{r}$.
\end{proof}

An optimal controller in a standard, stochastic finite-horizon LQR setting enjoys an analytical form, whereas the analytical feature is typically lost when the dynamics of interest are nonlinear \cite{hernandezlerma2012discrete}. Fortunately, when each $f_t$ in the dynamics \eqref{eq:dynamics} is \emph{quasi-cone-bounded}, we offer a controller $\hat{\mathbf{u}}$ that has an analytical form and a conservative performance guarantee with respect to \eqref{eq:main_optimal_control_problem}, presented next. For convenience, we use the notation
\begin{equation*}
    \Delta_t = \genfrac{[}{]}{0pt}{0}{\hphantom{(}\Delta_t^{xx}\hphantom{)^\top} \;\: \Delta_t^{xu}}{(\Delta_t^{xu})^\top \;\: \Delta_t^{uu}},
\end{equation*}
where $\Delta_t^{xx} \in \mathbb{S}^n_\geq$ and $\Delta_t^{uu} \in \mathbb{S}^m_\geq$ due to $\Delta_t \in \mathbb{S}^{n+m}_\geq$.

\begin{thm}
    \label{thm:regulation_upper_bounds}
    Under Assumptions \ref{assum:initial_state} and \ref{assum:quasi-cone-bounded}, let the system \eqref{eq:dynamics} operate under a controller $\hat{\mathbf{u}} \triangleq (\hat{u}_0,\dots,\hat{u}_{T-1})$ to be specified, and denote the corresponding states as $\hat{x}_0,\hat{x}_1,\dots,\hat{x}_T$, with $\hat{x}_0 \triangleq x_0$. For each $t \in \mathbb{T}$, let the parameters $\alpha_t > 0$ and $\beta_t > 0$ be given, and define $\hat{u}_t$ as $\hat{u}_t \triangleq K_t \hat{x}_t + \ell_t$, where $K_t$ and $\ell_t$ are calculated via the following Riccati-like recursion: \allowdisplaybreaks
    \begin{align}
            K_t
        &\triangleq  -\bigl(\hat{R}_t + \hat{B}_t^\top P_{t + 1} \hat{B}_t\bigr)^{-1} \bigl(\hat{S}_t + \hat{A}_t^\top P_{t + 1} \hat{B}_t\bigr)^\top, \label{Kt}\\
            \ell_t
        &\triangleq  -\bigl(\hat{R}_t + \hat{B}_t^\top P_{t + 1} \hat{B}_t\bigr)^{-1} B_t^\top \hat{f}_t^0, \label{eq:lhat}\\
            \hat{A}_t
        &\triangleq  (1 + \alpha_t)^{\nicefrac{1}{2}} A_t, \label{eq:Ahat} \\
            \hat{B}_t
        &\triangleq  (1 + \alpha_t)^{\nicefrac{1}{2}} B_t, \label{eq:Bhat} \\
            \hat{f}_t^0
        &\triangleq  (1 + \alpha_t) P_{t + 1} f_t^0 + q_{t + 1}, \\
            \hat{Q}_t
        &\triangleq  Q_t + \mathbb{E}(\tilde{A}_t^\top (P_{t + 1} + Z_t) \tilde{A}_t) + \lambda_t \Delta_t^{xx}, \label{eq:Qhat} \\
            \hat{R}_t
        &\triangleq  R_t + \mathbb{E}(\tilde{B}_t^\top (P_{t + 1} + Z_t) \tilde{B}_t) + \lambda_t \Delta_t^{uu}, \label{eq:Rhat} \\
            \hat{S}_t
        &\triangleq  S_t + \mathbb{E}(\tilde{A}_t^\top (P_{t + 1} + Z_t) \tilde{B}_t) + \lambda_t \Delta_t^{xu}, \label{eq:Shat} \\
        \begin{split}
                P_t
            &\triangleq  \hat{Q}_t + \hat{A}_t^\top P_{t + 1} \hat{A}_t \!-\! K_t^\top (\hat{R}_t + \hat{B}_t^\top P_{t + 1} \hat{B}_t) K_t, \label{Pt}
        \end{split} \\
            q_t
        &\triangleq  (A_t + B_t K_t)^\top ((1 + \alpha_t) P_{t + 1} f_t^0 + q_{t + 1}), \label{qt}\\
        \begin{split}
                r_t
            &\triangleq  r_{t + 1} + \operatorname{tr}((P_{t + 1} + Z_t) \Sigma_t) + \beta_t q_{t + 1}^\top q_{t + 1} \\ &\mathrel{\phantom{=}} + 2 q_{t + 1}^\top f_t^0 + (1 + \alpha_t) f_t^{0\top} P_{t + 1} f_t^0 \\ &\mathrel{\phantom{=}} - \ell_t^\top \bigl(\hat{R}_t + \hat{B}_t^\top P_{t + 1} \hat{B}_t\bigr) \ell_t + \lambda_t \delta_t, \quad \text{and} \label{rt}
        \end{split} \\
            \lambda_t
        &\triangleq   (1 + \alpha_t^{-1}) \varphi(P_{t + 1}) + \beta_t^{-1} \label{eq:lambdat},
    \end{align}
    with $P_T \triangleq Q_T$, $q_T \triangleq 0_n$, $r_T \triangleq 0$, and $\varphi(\cdot)$ being either $\operatorname{tr}(\cdot)$ or $\|\cdot\|$. Then, $\hat{\mathbf{u}}$ is a suboptimal solution to \eqref{eq:main_optimal_control_problem}, with an upper bound $\mathcal{J}_{\hat{\mathbf{u}}} \leq \mathbb{E}(x_0^\top P_0 x_0) + 2 \mathbb{E}(x_0)^\top q_0 + r_0$.
\end{thm}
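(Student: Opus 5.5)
The plan is a backward induction on a quadratic upper bound for the cost-to-go. Define $V_t(x) \triangleq x^\top P_t x + 2x^\top q_t + r_t$. The claim to prove is that, for each $t$,
\[
\mathbb{E}\Bigl(c_T(\hat x_T) + \textstyle\sum_{s=t}^{T-1} c_s^Z(\hat x_s,\hat u_s)\Bigr) \leq \mathbb{E}(V_t(\hat x_t)).
\]
At $t=0$, with $\hat x_0 = x_0$, this gives the stated bound after using linearity and $\mathbb{E}(2x_0^\top q_0) = 2\mathbb{E}(x_0)^\top q_0$.

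\emph{Admissibility.} First I check that $\hat{\mathbf{u}}$ is feasible for \eqref{eq:main_optimal_control_problem}. This requires $\hat R_t + \hat B_t^\top P_{t+1}\hat B_t$ to be invertible, which I obtain by backward induction showing $P_t \in \mathbb{S}^n_\geq$.
\begin{itemize}
\item $\hat R_t \succ 0$, because $R_t \succ 0$ and the remaining terms are positive semidefinite.
\item $P_t$ equals the minimum over $u$ of the quadratic form $\xi^\top \hat G_t \xi$, where
\[
\hat G_t \triangleq G_t^Z + \mathbb{E}(\tilde M_t^\top P_{t+1}\tilde M_t) + \lambda_t\Delta_t + (1+\alpha_t)[A_t\;B_t]^\top P_{t+1}[A_t\;B_t] \succeq 0.
\]
This is the standard Schur-complement identity: $P_t = \hat Q_t + \hat A_t^\top P_{t+1}\hat A_t - K_t^\top(\cdot)K_t$ is the Schur complement of that block matrix, so $P_t \succeq 0$.
\end{itemize}
Since $\hat u_t$ is affine in $\hat x_t$, it is $\mathcal F_t$-measurable. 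Square-integrability then follows inductively from Assumption~\ref{assum:quasi-cone-bounded} (which implies Assumption~\ref{assum:dynamics}), so Assumption~\ref{assum:control} holds. Proposition~\ref{prop:quadform} therefore applies, and $\mathcal J_{\hat{\mathbf u}} = \mathbb{E}(c_T(\hat x_T) + \sum_t c_t^Z(\hat\xi_t))$.

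\emph{Inductive step.} The base case $t=T$ holds with equality since $P_T = Q_T$, $q_T = 0$, $r_T = 0$. Assume the bound at $t+1$. By the tower property it suffices to show almost everywhere that
\[
c_t^Z(\hat\xi_t) + \mathbb{E}(V_{t+1}(\hat x_{t+1})\mid\mathcal F_t) \leq V_t(\hat x_t).
\]
\begin{enumerate}
\item Lemma~\ref{lemma:support1} gives the quadratic part:
\[
\mathbb{E}(\hat x_{t+1}^\top P_{t+1}\hat x_{t+1}\mid\mathcal F_t) = \hat\xi_t^\top\mathbb{E}(\tilde M_t^\top P_{t+1}\tilde M_t)\hat\xi_t + \operatorname{tr}(P_{t+1}\Sigma_t) + f_t(\hat\xi_t)^\top P_{t+1}f_t(\hat\xi_t).
\]
\item Its first part gives $\mathbb{E}(2\hat x_{t+1}^\top q_{t+1}\mid\mathcal F_t) = 2f_t(\hat\xi_t)^\top q_{t+1}$.
\item The nonlinear pieces together form $\psi(f_t(\hat\xi_t))$ with $P = P_{t+1}$ and $q = q_{t+1}$. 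Lemma~\ref{lem:quadratic_bound}, applied with $\alpha_t, \beta_t$, bounds this by $\hat\xi_t^\top\bar P_t\hat\xi_t + 2\hat\xi_t^\top\bar q_t + \bar r_t$.
\end{enumerate}

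Collecting terms, the left-hand side is bounded by
\[
\hat\xi_t^\top \hat G_t\hat\xi_t + 2\hat\xi_t^\top[A_t\;B_t]^\top\hat f_t^0 + \text{const},
\]
where the blocks of $\hat G_t$ are $\hat Q_t + \hat A_t^\top P_{t+1}\hat A_t$, $\hat S_t + \hat A_t^\top P_{t+1}\hat B_t$, and $\hat R_t + \hat B_t^\top P_{t+1}\hat B_t$. The constant is
\[
r_{t+1} + \operatorname{tr}((P_{t+1}+Z_t)\Sigma_t) + \beta_t q_{t+1}^\top q_{t+1} + 2q_{t+1}^\top f_t^0 + (1+\alpha_t)f_t^{0\top}P_{t+1}f_t^0 + \lambda_t\delta_t.
\]

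\emph{Main obstacle: the algebraic identity.} The hard step is checking that substituting $u = K_t x + \ell_t$ into this quadratic-plus-linear function of $(x,u)$ reproduces exactly $x^\top P_t x + 2x^\top q_t + r_t$.
\begin{enumerate}
\item Write $H \triangleq \hat R_t + \hat B_t^\top P_{t+1}\hat B_t$ and $N \triangleq \hat S_t + \hat A_t^\top P_{t+1}\hat B_t$.
\item The choices $K_t = -H^{-1}N^\top$ and $\ell_t = -H^{-1}B_t^\top\hat f_t^0$ are exactly the stationarity conditions for minimizing over $u$.
\item Completing the square, the quadratic term becomes $\hat Q_t + \hat A_t^\top P_{t+1}\hat A_t - K_t^\top H K_t = P_t$.
\item The linear term becomes $x^\top(A_t^\top + K_t^\top B_t^\top)\hat f_t^0$, plus a cross term $x^\top(N + K_t^\top H)\ell_t$. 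The cross term vanishes because $N + K_t^\top H = 0$, leaving $q_t$.
\item The constant collects $2\ell_t^\top B_t^\top\hat f_t^0 + \ell_t^\top H\ell_t = -\ell_t^\top H\ell_t$, which matches $r_t$.
\end{enumerate}
In each of these I must be careful with the $(1+\alpha_t)^{1/2}$ scalings in $\hat A_t$ and $\hat B_t$ against the unscaled $A_t$ and $B_t$ in $\bar q$.

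Since this holds pointwise, the result is in fact equality at the chosen $u$. Taking expectations then completes the induction. The bound also exhibits $\hat{\mathbf u}$ as a feasible point, so it is suboptimal with $\mathcal J^* \leq \mathcal J_{\hat{\mathbf u}}$.
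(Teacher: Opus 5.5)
Your proposal is correct and follows essentially the same route as the paper. Both arguments run a backward induction on the quadratic bound $x^\top P_t x + 2x^\top q_t + r_t$, use Lemma~\ref{lemma:support1} and Lemma~\ref{lem:quadratic_bound} to bound the one-step term by a quadratic in $\hat\xi_t$ with the block matrix $L_t$ (your $\hat G_t$), complete the square (the paper cites Lemma~\ref{lemma:opt}), and get $P_t \succeq 0$ from $L_t \succeq 0$. The only cosmetic difference is that you induct on unconditional expected tail costs via the tower property, whereas the paper bounds the conditional cost-to-go given $\hat{\mathcal{F}}_t$ almost everywhere; both work.
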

\begin{proof}
    For each $t \in \mathbb{T}$, let $\hat{h}_t \triangleq (\hat{x}_0,\hat{u}_0,\hat{x}_1,\dots,\hat{u}_{t-1},\hat{x}_t)$ and $\hat{\mathcal{F}}_t \triangleq \sigma(\hat{h}_t)$, with $\hat{h}_0 \triangleq \hat{x}_0$. The assumed conditions and definition of $\hat{\mathbf{u}}$ ensure that $\hat{\xi}_t \triangleq (\hat{x}_t,\hat{u}_t)$ and $\hat{x}_T$ are square-integrable, and $\hat{u}_t$ is $\hat{\mathcal{F}}_t$-measurable. To upper bound $\mathcal{J}_{\hat{\mathbf{u}}}$, first define $V_T \triangleq Y_T \triangleq c_T(\hat{x}_T)$ and $V_t \triangleq \mathbb{E}(Y_t|\hat{\mathcal{F}}_t)$, with $Y_t \triangleq c_T(\hat{x}_T) + \sum_{\tau = t}^{T-1} c_\tau^Z(\hat{\xi}_\tau)$ for each $t \in \mathbb{T}$. Given $P \in \mathbb{R}^{n \times n}$, $q \in \mathbb{R}^n$, and $r \in \mathbb{R}$, let $\phi(x;P,q,r) \triangleq x^\top P x + 2 x^\top q + r$, $x \in \mathbb{R}^n$. It suffices to show that $V_t \leq \phi(\hat{x}_t; P_t,q_t,r_t)$ a.e. for each $t \in \{0,\dots,T\}$. Proceed by induction. \emph{Base case:} $V_T = \hat{x}_T^{ \top} Q_T \hat{x}_T = \phi(\hat{x}_T; P_T, q_T, r_T)$, with $P_T = Q_T \in \mathbb{S}^n_\geq$. \emph{Induction hypothesis:} Assume that, for some $t \in \mathbb{T}$, there exist $P_{t+1} \in \mathbb{S}_\geq^n$, $q_{t+1} \in \mathbb{R}^n$, and $r_{t+1} \in \mathbb{R}$ s.t. $V_{t+1} \leq \phi(\hat{x}_{t+1}; P_{t+1}, q_{t+1}, r_{t+1})$ a.e. \emph{Induction step:} We use the induction hypothesis and $V_t = \mathbb{E}(V_{t+1}|\hat{\mathcal{F}}_t) + c_t^Z(\hat{\xi}_t)$ a.e. to derive $V_t \leq c_t^Z(\hat{\xi}_t) + \Phi_t + r_{t+1}$ a.e., where $\Phi_t \triangleq \mathbb{E}(\hat{x}_{t+1}^{\top} P_{t+1} \hat{x}_{t+1} |\hat{\mathcal{F}}_t) + 2 \mathbb{E}(\hat{x}_{t+1} | \hat{\mathcal{F}}_t)^\top q_{t+1}$. We apply Lemma \ref{lemma:support1} and $P_{t+1} \in \mathbb{S}_\geq^{n}$ to deduce that $\Phi_t =\hat{\xi}_t^{\top} \mathbb{E}(\tilde{M}_t^\top P_{t+1} \tilde{M}_t) \hat{\xi}_t + \operatorname{tr}(P_{t+1}\Sigma_t) + \Phi_t'$ a.e., with $\tilde{M}_t \triangleq [\tilde{A}_t \; \tilde{B}_t]$ and $\Phi_t' \triangleq f_t(\hat{\xi}_t)^\top P_{t+1} f_t(\hat{\xi}_t) + 2 f_t(\hat{\xi}_t)^\top q_{t+1}$. Next, we use Assumption \ref{assum:quasi-cone-bounded} and Lemma \ref{lem:quadratic_bound} to obtain $\Phi_t' \leq \hat{\xi}_t^{\top} \bar{P}_t \hat{\xi}_t + 2 \hat{\xi}_t^{\top} \bar{q}_t + \bar{r}_t$, where $\bar{P}_t \in \mathbb{S}^{n+m}_\geq$, $\bar{q}_t$, and $\bar{r}_t$ depend on the quasi-cone parameters. All together,
    \begin{equation}\label{eq:almostthere}
        V_t \overset{\mathrm{a.e.}}{\leq}\hat{\xi}_t^{\top} L_t \hat{\xi}_t + 2\hat{\xi}_t^{\top} \bar{q}_t + r_{t+1} + \operatorname{tr}((P_{t+1} + Z_t)\Sigma_t) + \bar{r}_t,
    \end{equation}
    where $L_t \triangleq G_t^Z + \mathbb{E}(\tilde{M}_t^\top P_{t+1} \tilde{M}_t) + \bar{P}_t $ can be written as 
       \begin{equation*}
        L_t = \begin{bmatrix} \hat{Q}_t + \hat{A}_t^\top P_{t+1} \hat{A}_t & \hat{S}_t + \hat{A}_t^\top P_{t+1} \hat{B}_t \\ (\hat{S}_t + \hat{A}_t^\top P_{t+1} \hat{B}_t)^\top & \hat{R}_t + \hat{B}_t^\top P_{t+1} \hat{B}_t  \end{bmatrix} ,
    \end{equation*}
    with $\hat{R}_t + \hat{B}_t^\top P_{t+1} \hat{B}_t \in \mathbb{S}_>^m$.
    Lastly, we apply quadratic optimization (Lemma \ref{lemma:opt}) and the definitions \eqref{Pt}--\eqref{rt} to deduce that the right side of \eqref{eq:almostthere} equals $\hat{x}_t^{\top} P_t \hat{x}_t + 2 \hat{x}_t^{\top} q_t + r_t$. $P_t$ can be written as $P_t = \kappa_t^\top L_t \kappa_t$, where $\kappa_t \triangleq [I_n \;\; K_t^\top]^\top$ and $L_t \in \mathbb{S}^{n+m}_\geq$, so $P_t \in \mathbb{S}_\geq^n$.
\end{proof}

The formulae for $K_t$ and $P_t$ in Theorem \ref{thm:regulation_upper_bounds} resemble those from a standard Riccati recursion for stochastic LQR. However, the usual matrices $Q_t$ and $R_t$ in the cost function \eqref{eq:costs} are inflated by terms depending on the risk parameter $Z_t$, the multiplicative noise, and the quasi-cone parameter $\Delta_t$ (see \eqref{eq:Qhat}--\eqref{eq:Rhat}). The quasi-cone parameters $A_t$ and $B_t$ that characterize the linear part of the nominal model of $f_t$ are inflated due to a design parameter $\alpha_t$ (see \eqref{eq:Ahat}--\eqref{eq:Bhat}). To reason about how to select $\alpha_t$, consider the cone-bounded case and rewrite the bound on the cost-to-go $V_t$ as
\begin{align}
    V_t &\leq \hat{\xi}_t^\top \bigl(\alpha_t [A_t \; \; B_t]^\top P_{t + 1} [A_t \; \; B_t] + \alpha_t^{-1} \varphi(P_{t + 1}) \Delta_t\bigr) \hat{\xi}_t \nonumber \\ &\phantom{{}={}} + [\text{terms not depending on $\alpha_t$}] \text{ a.e.} \nonumber \\
    &\leq \bigl(\alpha_t \|[A_t \; \; B_t]\|^2 + \alpha_t^{-1} \|\Delta_t\|\bigr) \varphi(P_{t + 1}) \|\hat{\xi}_t\|^2 \label{eq:approx_bound}\\ &\phantom{{}={}} + [\text{terms not depending on $\alpha_t$}]. \nonumber
\end{align}
The terms depending on $\alpha_t$ arise as a trade-off when applying a Peter--Paul inequality. $\alpha_t$ allows the user to tune the weights on these terms, with a smaller choice of $\alpha_t$ generally better when the ``magnitude'' of the nonlinearity, $\|\Delta_t\|$, is small compared to $\|[A_t \; \; B_t]\|^2$. More precisely, \eqref{eq:approx_bound} with some elementary calculus suggests an ``optimal'' $\alpha_t$ of about $\|\Delta_t\|^{\nicefrac{1}{2}} \|[A_t \; \; B_t]\|^{-1}$. In the quasi-cone-bounded case, we may perform similar analysis demonstrating that a smaller choice of $\beta_t$ is generally better when $\|\Delta_t\|$ and $\delta_t$ are small compared to $\|f_t^0\|^2$, however it is more difficult to estimate an ``optimal'' $\beta_t$.

Lastly, we consider a special case of the gains from Theorem \ref{thm:regulation_upper_bounds}. We find that such gains characterize an optimal solution to a problem related to \eqref{eq:main_optimal_control_problem}, but with $f_t$ set to an inflated linear model.

\begin{lem}
    \label{lemma12}
    For each $t \in \mathbb{T}$, given some $\alpha_t > 0$, $A_t \in \mathbb{R}^{n \times n}$, and $B_t \in \mathbb{R}^{n \times m}$, define $\hat{A}_t$ and $\hat{B}_t$ as in \eqref{eq:Ahat} and \eqref{eq:Bhat}, respectively. Furthermore, specify $K_t$, $P_t$, $q_t$, and $r_t$ as in \eqref{Kt}, \eqref{Pt}, \eqref{qt}, and \eqref{rt}, respectively, but in the special case of $f_t^0 = 0_n$, $\Delta_t = 0_{n\times n}$, and $\delta_t = 0$ for each $t \in \mathbb{T}$; use $P_T = Q_T$, $q_T = 0_n$, and $r_T = 0$. Consider
    \begin{equation}
    \label{eq:simple_optimal_control_problem}
        \begin{alignedat}{2}
            &\inf_\mathbf{u}         && \mathcal{J}_{\mathbf{u}} \\
            &\operatorname{s.t. } \; && x_{t + 1} = (\hat{A}_t + \tilde{A}_t) x_t    +  (\hat{B}_t + \tilde{B}_t) u_t + v_t  \text{ and} \\
            &                        && \text{$u_t \in \mathcal{L}^2(\mathcal{F}_t)$ for each $t \in \mathbb{T}$; Assumption \ref{assum:initial_state}}.
        \end{alignedat}
    \end{equation}
    Define $\mathbf{u}^* \triangleq (u_0^*,\dots,u_{T-1}^*)$, with $u_t^* \triangleq K_t x_t^*$ for each $t \in \mathbb{T}$; $x_0^*,x_1^*,\dots,x_T^*$ is the state trajectory satisfying the dynamics in \eqref{eq:simple_optimal_control_problem} under $\mathbf{u}^*$, with $x_0^* \triangleq x_0$. Denote the optimal value of the risk-aware problem \eqref{eq:simple_optimal_control_problem} as $\mathcal{V}^*$. Then, $\mathcal{V}^* = \mathcal{J}_{\mathbf{u}^*} = \mathbb{E}(x_0^\top P_0 x_0) + r_0$.
\end{lem}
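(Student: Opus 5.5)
We will show two inequalities, $\mathcal{J}_{\mathbf{u}} \geq \mathbb{E}(x_0^\top P_0 x_0) + r_0$ for every admissible $\mathbf{u}$ and $\mathcal{J}_{\mathbf{u}^*} = \mathbb{E}(x_0^\top P_0 x_0) + r_0$, via a completion-of-squares (verification) argument.

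First we simplify the recursion. In the special case $f_t^0 = 0_n$, $\Delta_t = 0$, $\delta_t = 0$, we have $q_T = 0_n$ and \eqref{qt} gives $q_t = 0_n$ by backward induction. Then $\hat f_t^0 = 0_n$, so $\ell_t = 0_m$. The recursion for $r_t$ reduces to $r_t = r_{t+1} + \operatorname{tr}((P_{t+1}+Z_t)\Sigma_t) + \lambda_t\cdot 0$, and $\lambda_t$ plays no role because it multiplies $\Delta_t$ and $\delta_t$. Hence $\hat Q_t, \hat R_t, \hat S_t$ contain only the $\mathbb{E}(\tilde M_t^\top (P_{t+1}+Z_t)\tilde M_t)$ inflation. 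As in the proof of Theorem \ref{thm:regulation_upper_bounds}, $P_t = \kappa_t^\top L_t \kappa_t \in \mathbb{S}^n_\geq$ and $\hat R_t + \hat B_t^\top P_{t+1}\hat B_t \in \mathbb{S}^m_>$.

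The system in \eqref{eq:simple_optimal_control_problem} has the form \eqref{eq:dynamics} with the linear $f_t(\xi) = [\hat A_t \; \hat B_t]\xi$, which satisfies Assumption \ref{assum:dynamics}. Therefore Proposition \ref{prop:quadform} applies, and $\mathcal{J}_{\mathbf{u}} = \mathbb{E}(c_T(x_T) + \sum_t c_t^Z(\xi_t))$ for any admissible $\mathbf{u}$. Fix such a $\mathbf{u}$ and write $\phi_t(x) \triangleq x^\top P_t x + r_t$.

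By Lemma \ref{lemma:support1} with $P = P_{t+1}$ and $f_t(\xi_t) = [\hat A_t\;\hat B_t]\xi_t$, we get
\[
\mathbb{E}(\phi_{t+1}(x_{t+1})|\mathcal{F}_t) + c_t^Z(\xi_t) \overset{\mathrm{a.e.}}{=} \xi_t^\top L_t \xi_t + \operatorname{tr}((P_{t+1}+Z_t)\Sigma_t) + r_{t+1}.
\]
Here $L_t$ is exactly the block matrix displayed in the proof of Theorem \ref{thm:regulation_upper_bounds}, since $\bar P_t = [\hat A_t\;\hat B_t]^\top P_{t+1}[\hat A_t\;\hat B_t]$ when $\Delta_t = 0$. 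This identity holds with equality, not as a bound, because no Peter--Paul step is used.

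Completing the square in $u_t$ (Lemma \ref{lemma:opt}), with $H_t \triangleq \hat R_t + \hat B_t^\top P_{t+1}\hat B_t$, gives
\[
\xi_t^\top L_t \xi_t = x_t^\top P_t x_t + (u_t - K_t x_t)^\top H_t (u_t - K_t x_t).
\]
Combining the two displays yields
\[
\mathbb{E}(\phi_{t+1}(x_{t+1})|\mathcal{F}_t) + c_t^Z(\xi_t) \overset{\mathrm{a.e.}}{=} \phi_t(x_t) + (u_t-K_tx_t)^\top H_t(u_t-K_tx_t).
\]
We take expectations, which are finite by square-integrability, and apply the tower property. Summing over $t\in\mathbb{T}$ telescopes, using $\phi_T = c_T$, to
\[
\mathcal{J}_{\mathbf{u}} = \mathbb{E}(x_0^\top P_0 x_0) + r_0 + \sum_{t} \mathbb{E}\bigl((u_t-K_tx_t)^\top H_t (u_t-K_tx_t)\bigr).
\]

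Since each $H_t \succ 0$, the sum is nonnegative, so $\mathcal{J}_{\mathbf{u}} \geq \mathbb{E}(x_0^\top P_0x_0)+r_0$. The sum vanishes for $\mathbf{u}^*$. For admissibility of $\mathbf{u}^*$, $u_t^* = K_t x_t^*$ is $\mathcal{F}_t$-measurable, and by induction $x_t^*$ is square-integrable, hence so is $u_t^*$. Thus $\mathcal{V}^* = \mathcal{J}_{\mathbf{u}^*} = \mathbb{E}(x_0^\top P_0x_0)+r_0$.

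The main obstacle is bookkeeping rather than conceptual. We must verify that the block form of $L_t$ matches \eqref{Kt} and \eqref{Pt} exactly, with the factor $(1+\alpha_t)$ absorbed into $\hat A_t, \hat B_t$. We must also check integrability so that the conditional-expectation identities and the telescoping are legitimate for an arbitrary admissible $\mathbf{u}$.
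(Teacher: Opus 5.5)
Your proof is correct, but it certifies optimality by a different, more elementary route than the paper. The paper makes the same simplifications ($q_t = 0_n$, $\ell_t = 0_m$, $r_t = r_{t+1} + \operatorname{tr}((P_{t+1}+Z_t)\Sigma_t)$) and rewrites $\mathcal{J}_{\mathbf{u}}$ the same way via Proposition~\ref{prop:quadform}. It then cites the finite-horizon dynamic programming theorem of Kumar and Varaiya together with Lemma~\ref{lemma:opt}. The value functions are computed backward as $x^\top P_t x + r_t$ with minimizer $K_t x$, and the cited theorem guarantees that this Markov linear policy is optimal over all history-dependent admissible controllers.

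You prove that guarantee directly. Your per-stage identity holds for an arbitrary admissible $\mathbf{u}$. It is the exact, equality counterpart of the bound \eqref{eq:almostthere}: for the inflated linear dynamics, $\mathbb{E}(x_{t+1}^\top P_{t+1} x_{t+1} | \mathcal{F}_t)$ reproduces $L_t$ with $\Delta_t = 0$ without any Peter--Paul step. Telescoping then gives
\[
\mathcal{J}_{\mathbf{u}} = \mathbb{E}(x_0^\top P_0 x_0) + r_0 + \textstyle\sum_{t \in \mathbb{T}} \mathbb{E}\bigl((u_t-K_tx_t)^\top H_t (u_t-K_tx_t)\bigr).
\]

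Your route is self-contained, with no external theorem whose hypotheses must be matched to this setting. It also gives more: an exact formula for the optimality gap of every admissible controller, and, since $H_t \succ 0$, a.e.\ uniqueness of the optimal controller. The paper's route is shorter and keeps the value-function viewpoint uniform with Theorem~\ref{thm:regulation_upper_bounds}.
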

\begin{proof}
    Under the specified conditions, $q_t$ and $r_t$ simplify to $q_t = 0_n$ and $r_t = r_{t+1} + \operatorname{tr}((P_{t+1} + Z_t)\Sigma_t)$, respectively, for each $t \in \mathbb{T}$. $\hat{Q}_t$ \eqref{eq:Qhat}, $\hat{R}_t$ \eqref{eq:Rhat}, and $\hat{S}_t$ \eqref{eq:Shat} also simplify due to $\Delta_t = 0_{n \times n}$; e.g., $\hat{Q}_t = Q_t + \mathbb{E}(\tilde{A}_t^\top(P_{t+1} + Z_t) \tilde{A}_t)$. Since Assumptions \ref{assum:initial_state}--\ref{assum:control} hold in \eqref{eq:simple_optimal_control_problem}, we can rewrite the objective as  $\mathcal{J}_{\mathbf{u}} = \mathbb{E}(c_T(x_T) + \sum_{t=0}^{T-1} c_t^Z(x_t,u_t))$, as discussed near \eqref{eq:Gtz}. Then, apply dynamic programming \cite[Thm. 2.15, p. 75]{kumar1986stochastic} and quadratic optimization (Lemma \ref{lemma:opt}).
\end{proof}

While differences arise in presentation, $\mathbf{u}^*$ is equivalent to the optimal controller from  \cite{fujimoto2011optimalfinitetime} with minor modifications (e.g., set $\hat{A}_t + \tilde{A}_t$ to their stochastic dynamics matrix and $S_t$ to zero, and use a time-invariant setting).

\section{Numerical Examples}
\label{sec:numerical_results}

Returning to the controller from Theorem \ref{thm:regulation_upper_bounds}, first we numerically evaluate its performance in the cone-bounded case for ease of comparison to earlier work \cite{scherzinger1982estimation}, which did not study the quasi-cone-bounded case. We consider the quasi-cone-bounded case in a high-dimensional setting at the end of this section. 

Define a quantizer $\psi_\gamma$ for $\gamma > 1$ as $\psi_\gamma(0) \triangleq 0$ and $\psi_\gamma(z) \triangleq \frac{2}{\gamma + 1} \operatorname{sgn}(z) \gamma^{\lceil \log_\gamma |z| \rceil}$ for $z \ne 0$ (similar to Fig. \ref{fig:example_bounding_regions}(\subref{fig:f1})). Also define $\psi_1(z) \triangleq z$. With $x_0 \sim \mathcal{N}(0_2, 4 I_2)$ and $\tilde{w}_t \sim \mathcal{N}(0, 0.25)$, consider the dynamics
\begin{equation}
    \label{eq:dummy_system}
    x_{t + 1} =
    \begin{bmatrix}
        a_1^\top x_t + \tilde{w}_t (x_t)_2 \\
        \psi_\gamma(a_2^\top x_t + u_t)
    \end{bmatrix}
\end{equation}
over a horizon of length $T = 10$, where $a_1 = (1, 0.75)$ and $a_2 = (0.75, -1)$. The system is cone-bounded with cone parameters $A_t = [a_1 \ a_2]^\top$, $B_t = (0, 1)$, and $\Delta_t = \bigl(\frac{\gamma - 1}{\gamma + 1}\bigr)^2 [a_2^\top \ 1]^\top [a_2^\top \ 1]$. This implies that the system is linear when $\gamma = 1$, and becomes more nonlinear as $\gamma$ increases. It is also cone-bounded in the sense of \cite{scherzinger1982estimation} with cone parameters $a = 1.25 \bigl(\frac{\gamma - 1}{\gamma + 1}\bigr)$ and $b = \frac{\gamma - 1}{\gamma + 1}$. We choose $Q_t = I_2$, $R_t = 1$, and $S_t = 0_{2 \times 1}$, and estimate $\mathcal{J}_\mathbf{u}$ for each controller considered as the mean cost incurred with $N = 10^6$ randomly generated disturbance processes. We evaluate controller performance first by varying the degree of nonlinearity, then by varying the risk-awareness.

First, we consider the risk-neutral case $Z_t = 0_{2 \times 2}$ and determine the controllers in Theorem \ref{thm:regulation_upper_bounds} with $\alpha_t \in \{\nicefrac{1}{2}, 1, 2\}$, denoted as $\hat{\mathbf{u}}^{(\nicefrac{1}{2})}$, $\hat{\mathbf{u}}^{(1)}$, and $\hat{\mathbf{u}}^{(2)}$, respectively. Fig. \ref{fig:varying_gamma} (top) shows the value of $\mathcal{J}_\mathbf{u}$ for each controller as $\gamma$ varies. From this plot, we observe that the better choice of $\alpha_t$ is generally larger when $\gamma$ is larger, which is in line with our discussion after Theorem \ref{thm:regulation_upper_bounds}. In that discussion, we suggested an ``optimal'' choice of $\alpha_t$ of about $\|\Delta_t\|^{\nicefrac{1}{2}} \|[A_t \; \; B_t]\|^{-1}$, which equals $\frac{\gamma - 1}{\gamma + 1}$ in this example. So, we compare our suboptimal controller with this choice of $\alpha_t$, denoted $\hat{\mathbf{u}}^{\alpha^*}$, to the one proposed in \cite[Thm. 4, eq. (25)]{scherzinger1982estimation}, $\mathbf{u}^{\text{orig}}$, in Fig. \ref{fig:varying_gamma} (bottom). Both controllers perform identically when $\gamma = 1$, i.e., when the system is linear, however our proposed controller retains better performance for larger $\gamma$.

\begin{figure}[htbp]
    \centering
    \includegraphics[width=\linewidth]{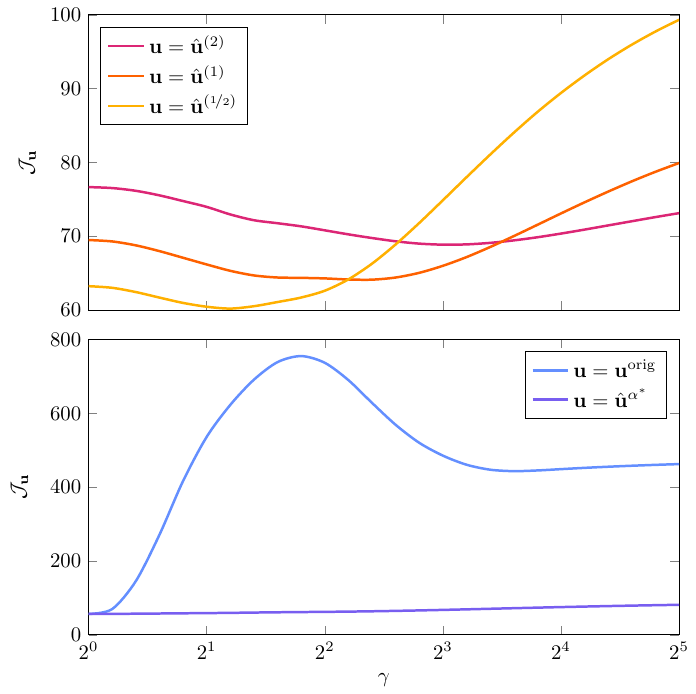}
    \caption{For different choices of $\mathbf{u}$, we plot $\mathcal{J}_\mathbf{u}$ \eqref{eq:objective1}, with $Z_t = 0_{2 \times 2}$, incurred by the system (\ref{eq:dummy_system}) for various $\gamma$.}
    \label{fig:varying_gamma}
\end{figure}

Now, we fix $\gamma = 10$, then evaluate risk-awareness by choosing nonzero $Z_t = \operatorname{diag}(z, 0)$, for $z > 0$. This risk-aware cost aims to mitigate the expected conditional variance of the first state. As we see in Fig. \ref{fig:varying_z}, this is achieved by our proposed controller with $\hat{\mathbf{u}}^{(2)}$ even when the system is nonlinear, with the risk-aware cost being decreased by almost $50\%$ for appropriately large $z$. Somewhat counterintuitively, this reduction in variability of the first state is accomplished by penalizing the \emph{second} state, since $\varrho_t = 0.25 z (x_t)_2^2$ a.e. A naive, and less effective, approach would have simply introduced an additional cost on the first state. In our example, one may deduce that the second state should be penalized to reduce the variability of the first state, since the disturbance is multiplied by the second state. However, in more complicated systems with various sources of multiplicative disturbances, appropriate choices of additional costs may be highly unintuitive, in which case a variance-suppression-based approach would provide a systematic method of reducing variability in the system.

\begin{figure}[htbp]
    \centering
    \includegraphics[width=\linewidth]{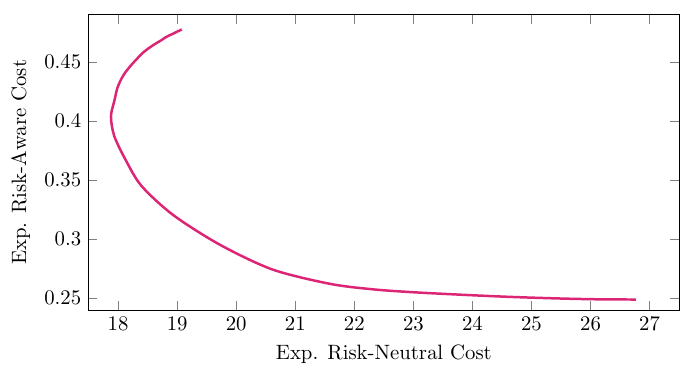}
    \caption{The total expected conditional variance $\mathbb{E}\bigl(\sum_{t = 1}^T \operatorname{cov}((x_t)_1 | \mathcal{F}_t)\bigr)$ plotted against the expected risk-neutral cost incurred by our controller with $\alpha_t = 2$ as $z$ varies from $10^{-3}$ to $10^6$. The arrows indicate the direction of increasing $z$, i.e., increasing risk-awareness.}
    \label{fig:varying_z}
\end{figure}

Previously, we presented an example with a cone-bounded nonlinearity. Now, we evaluate the viability of the quasi-cone-bounded controller from observing that a simple MATLAB script can implement it for $n = 1000$ states and $m = 100$ controls over a horizon of length $T = 10$ in just 1.4 seconds (on a laptop with an i7-1165G7 @ 2.80GHz processor). For general nonlinear systems, a numerical dynamic programming approach would be unviable for a stochastic system of such size.

\section{Conclusion}
\label{sec:conclusion}

This article develops a tractable, principled approach to risk-aware control for a class of nonlinear systems. While this contribution helps bridge the gap between risk-aware control and complex systems, much more work is needed. One extension is to develop a risk-aware controller that accounts for stochastic variability arising from both multiplicative and additive noise. An initial idea in this direction is to combine the risk assessments from \cite{fujimoto2011optimalfinitetime} and \cite{tsiamis2020riskconstrained}. Additional directions include considering full-distribution uncertainty characterizations \cite{wang2025policy}, partial observability, and multi-agent systems \cite{patel2024riskaware}.

\section*{Acknowledgment}
\label{sec:ack}

M.P.C. thanks Marco Pavone, Huizhen Yu, and Dionysis Kalogerias for helpful advice, and an anonymous reviewer of \cite{chapman2025riskaware} for pointing out the work of \cite{fujimoto2011optimalfinitetime}.

\section*{Appendix}

We include facts below to help this work be more self-contained. These facts or variations of them appear or are used elsewhere \cite{boyd2004convex,fujimoto2011optimalfinitetime,tsiamis2020riskconstrained,chapman2025riskaware,patel2025risk,weitobesubmitted}. A variation of Lemma \ref{lem:predictive_variance_helper3} can be found in \cite[Prop. 1(a)]{weitobesubmitted}. Lemma \ref{lem:predictive_variance_helper2} appears in \cite[Lemma 4.2]{patel2025risk} and \cite[Prop. 2]{weitobesubmitted}. A special case of the result of Lemma \ref{lem:predictive_variance_helper1} appears in \cite[p. 12606]{fujimoto2011optimalfinitetime}. Lemmas \ref{lem:predictive_variance_helper3}--\ref{lem:predictive_variance_helper2} follow from elementary probability theory \cite{ash1972probability}. Lemma \ref{lemma:opt} follows from quadratic optimization \cite[Example 4.5]{boyd2004convex}. We omit proofs for brevity. In Lemmas \ref{lem:predictive_variance_helper3}--\ref{lem:predictive_variance_helper2}, $\mathcal{G} = \sigma(h)$ is a sub-$\sigma$-algebra of $\mathcal{F}$ generated by a random vector $h$, and $v : (\Omega, \mathcal{G}) \to (\mathbb{R}^n, \mathcal{B}_{\mathbb{R}^n})$. In Lemma \ref{lem:predictive_variance_helper2}, $w : (\Omega,\mathcal{G}) \rightarrow (\mathbb{R}^m, \mathcal{B}_{\mathbb{R}^m})$.

\begin{lem}[Generalized Peter--Paul inequality]
    \label{lem:peter_paul_inequality}
    Let $y \in \mathbb{R}^n$, $z \in \mathbb{R}^n$, $P \in \mathbb{S}_\geq^n$, and $\alpha > 0$ be given. Then, $2 y^\top P z \leq \alpha^{-1} y^\top P y + \alpha z^\top P z$ holds, and this inequality implies $(y+z)^\top P (y+z) \leq (1 + \alpha^{-1})y^\top P y + (1+\alpha) z^\top P z$.
\end{lem}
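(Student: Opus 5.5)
The plan is to prove the first inequality directly by a square-completion argument for positive semidefinite forms, and then obtain the second as a corollary by expanding the quadratic form.

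For the first inequality, I will use that $P \in \mathbb{S}^n_\geq$ defines a positive semidefinite bilinear form. Consider the vector $w \triangleq \alpha^{-1/2} y - \alpha^{1/2} z$. Then $w^\top P w \geq 0$. Expanding gives
\begin{equation*}
0 \leq w^\top P w = \alpha^{-1} y^\top P y - 2 y^\top P z + \alpha z^\top P z ,
\end{equation*}
where the cross terms combine because $P$ is symmetric, so $y^\top P z = z^\top P y$. Rearranging yields $2 y^\top P z \leq \alpha^{-1} y^\top P y + \alpha z^\top P z$. The hypothesis $\alpha > 0$ is needed only so that $\alpha^{\pm 1/2}$ are real.

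For the second inequality, I will expand
\begin{equation*}
(y+z)^\top P (y+z) = y^\top P y + 2 y^\top P z + z^\top P z ,
\end{equation*}
again using symmetry of $P$. Substituting the first inequality for the cross term gives
\begin{equation*}
(y+z)^\top P (y+z) \leq (1+\alpha^{-1}) y^\top P y + (1+\alpha) z^\top P z .
\end{equation*}

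There is no real obstacle in this argument. The only point requiring care is to use positive semidefiniteness of $P$ in the form $w^\top P w \geq 0$ for an arbitrary vector $w$, rather than appealing to Cauchy--Schwarz for the Euclidean inner product. Cauchy--Schwarz for the semi-inner product $\langle a,b\rangle_P = a^\top P b$ followed by AM--GM would also work, but the square-completion route is shorter.
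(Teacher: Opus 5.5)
Your proof is correct: expanding $w^\top P w \geq 0$ with $w = \alpha^{-1/2} y - \alpha^{1/2} z$ and then substituting the resulting bound into the expansion of $(y+z)^\top P (y+z)$ is the standard square-completion argument. The paper omits this proof as elementary, so there is no different route to compare against, and your argument supplies the omitted proof.
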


\begin{lem}
    \label{lem:predictive_variance_helper3}
    Let $v$ be integrable. Let $M$ be an integrable, $\mathbb{R}^{m \times n}$-valued random matrix, independent of $h$. Then, $M v$ is integrable and $\mathbb{E}(M v | \mathcal{G}) = \mathbb{E}(M) v $ a.e.
\end{lem}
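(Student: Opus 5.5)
We must prove Lemma \ref{lem:predictive_variance_helper3}: for integrable $\mathcal{G}$-measurable $v$ and an integrable random matrix $M$ independent of $h$, the product $Mv$ is integrable and $\mathbb{E}(Mv|\mathcal{G}) = \mathbb{E}(M)v$ a.e. The plan is to reduce to scalar products and then use a standard independence property of conditional expectation.

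\emph{Reduction to scalars.} Componentwise, $(Mv)_i = \sum_{j} M_{ij} v_j$. Conditional expectation is linear, and a finite sum of integrable variables is integrable. Hence it suffices to show the following for a scalar $X = M_{ij}$, integrable and independent of $\mathcal{G} = \sigma(h)$, and a scalar $Y = v_j$, integrable and $\mathcal{G}$-measurable: $XY$ is integrable and $\mathbb{E}(XY|\mathcal{G}) = \mathbb{E}(X) Y$ a.e.

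\emph{Integrability.} Since $v$ is $\sigma(h)$-measurable and $M$ is independent of $h$, the variables $X$ and $Y$ are independent. Therefore $\mathbb{E}|XY| = \mathbb{E}|X|\,\mathbb{E}|Y| < \infty$ by the product rule for independent nonnegative variables (Tonelli on the product of the two laws). This is the only place where independence is used to rescue integrability; otherwise the product of two merely integrable variables need not be integrable.

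\emph{Conditional expectation identity.} The candidate $\mathbb{E}(X)Y$ is $\mathcal{G}$-measurable and integrable, so it remains to check $\mathbb{E}(XY 1_G) = \mathbb{E}(X)\mathbb{E}(Y1_G)$ for every $G \in \mathcal{G}$. The variable $Y1_G$ is $\mathcal{G}$-measurable and integrable, hence independent of $X$. The same product rule for independent integrable variables, whose product is integrable as shown above, then gives the identity.

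\emph{Main obstacle.} The only subtle step is the product rule $\mathbb{E}(XW) = \mathbb{E}(X)\mathbb{E}(W)$ for independent $X$ and $W$ that are integrable but not necessarily bounded. I would prove it by splitting into positive and negative parts and using monotone approximation by simple functions, or by citing Fubini on the joint law $\mathbb{P}_X \otimes \mathbb{P}_W$, as in \cite{ash1972probability}.
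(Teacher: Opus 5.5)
Your proof is correct. Independence of $M_{ij}$ from $\mathcal{G}$ gives $\mathbb{E}|M_{ij}v_j| = \mathbb{E}|M_{ij}|\,\mathbb{E}|v_j| < \infty$, and testing against $v_j 1_G$, which is $\mathcal{G}$-measurable and hence independent of $M_{ij}$, verifies the defining property of $\mathbb{E}(Mv|\mathcal{G})$. The paper omits this proof and only attributes it to elementary probability theory \cite{ash1972probability}; your argument is a standard way to supply it. An equivalent route is to pull out the $\mathcal{G}$-measurable factor $v$ once $Mv$ is known to be integrable, and then use $\mathbb{E}(M|\mathcal{G}) = \mathbb{E}(M)$ a.e.
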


\begin{lem}
    \label{lem:predictive_variance_helper2}
    Let $v$ and $w$ be square-integrable. Let $M$ be an integrable, $\mathbb{R}^{n \times m}$-valued random matrix, independent of $h$. Then, $v^\top M w$ is integrable and $\mathbb{E}(v^\top M w | \mathcal{G}) = v^\top \mathbb{E}(M) w $ a.e.
\end{lem}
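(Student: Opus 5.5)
The plan is to reduce the statement to scalar entries and use linearity of conditional expectation. Write
\[
v^\top M w = \sum_{i=1}^n \sum_{j=1}^m v_i w_j M_{ij},
\]
where $v_i$, $w_j$ and $M_{ij}$ are the entries of $v$, $w$ and $M$. It then suffices to show two things for each pair $(i,j)$: the scalar product $X_{ij}\triangleq v_i w_j$ times $M_{ij}$ is integrable, and $\mathbb{E}(X_{ij}M_{ij}|\mathcal{G}) = X_{ij}\,\mathbb{E}(M_{ij})$ a.e. Summing over $(i,j)$ with linearity \cite[Thm. 6.5.2(a)]{ash1972probability} then gives $\mathbb{E}(v^\top M w|\mathcal{G}) = v^\top \mathbb{E}(M) w$ a.e., and the finite sum of integrable terms is integrable.

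\emph{Step 1 (integrability).} Since $v$ and $w$ are square-integrable, the Cauchy--Schwarz inequality gives $\mathbb{E}|v_i w_j| \leq (\mathbb{E}v_i^2)^{1/2}(\mathbb{E}w_j^2)^{1/2} < \infty$, so $X_{ij}$ is integrable. Also, $X_{ij}$ is $\mathcal{G}$-measurable, so it is a Borel function of $h$. Because $M$ is independent of $h$, the variable $M_{ij}$ is independent of $X_{ij}$. For independent integrable random variables the product is integrable, with $\mathbb{E}|X_{ij}M_{ij}| = \mathbb{E}|X_{ij}|\,\mathbb{E}|M_{ij}| < \infty$.

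\emph{Step 2 (conditional expectation).} The candidate $X_{ij}\mathbb{E}(M_{ij})$ is $\mathcal{G}$-measurable and integrable. To identify it as the conditional expectation, I verify the defining property. Fix $G \in \mathcal{G}$. Then $1_G X_{ij}$ is an integrable function of $h$, hence independent of $M_{ij}$, so
\[
\mathbb{E}(1_G X_{ij} M_{ij}) = \mathbb{E}(1_G X_{ij})\,\mathbb{E}(M_{ij}) = \mathbb{E}\bigl(1_G X_{ij}\mathbb{E}(M_{ij})\bigr).
\]
By a.e.-uniqueness of conditional expectation, $\mathbb{E}(X_{ij}M_{ij}|\mathcal{G}) = X_{ij}\mathbb{E}(M_{ij})$ a.e.

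The main point needing care is Step 1. Here $v$ and $w$ are only square-integrable, not bounded, so the integrability of $v^\top M w$ cannot come from a moment bound on $M$, since $M$ is only integrable. Independence is what resolves this, through the product rule $\mathbb{E}|XY| = \mathbb{E}|X|\,\mathbb{E}|Y|$. That rule follows from Fubini's theorem applied to the product of the two marginal distributions.
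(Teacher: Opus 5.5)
Your proof is correct and is the elementary-probability argument the paper has in mind; the paper omits the proof and only points to \cite{ash1972probability}. Your steps are the entrywise decomposition, integrability of each $v_i w_j M_{ij}$ via Cauchy--Schwarz and the product rule $\mathbb{E}|XY| = \mathbb{E}|X|\,\mathbb{E}|Y|$ for independent variables, and a check of the defining property of conditional expectation on sets $G \in \mathcal{G}$.
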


\begin{lem}
    \label{lem:predictive_variance_helper1}
    Let $\mathcal{G}$ be a sub-$\sigma$-algebra of $\mathcal{F}$, $M \in \mathbb{R}^{n \times n}$, and $v : (\Omega, \mathcal{F}) \to (\mathbb{R}^n, \mathcal{B}_{\mathbb{R}^n})$ be square-integrable. Then, $\operatorname{tr}(M\operatorname{cov}(v | \mathcal{G})) = \mathbb{E}(v^\top M v |\mathcal{G}) - \mathbb{E}(v | \mathcal{G})^\top M \mathbb{E}(v | \mathcal{G}) $ a.e.
\end{lem}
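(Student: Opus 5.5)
The statement is Lemma~\ref{lem:predictive_variance_helper1}, a conditional variance identity. I will expand the definition of conditional covariance and use linearity of conditional expectation together with the "taking out what is known" property.

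\textbf{Step 1 (definition).} I take $\operatorname{cov}(v|\mathcal{G}) \triangleq \mathbb{E}\bigl((v-\mu)(v-\mu)^\top \big| \mathcal{G}\bigr)$, where $\mu \triangleq \mathbb{E}(v|\mathcal{G})$. Square-integrability of $v$ gives integrability of $vv^\top$. Conditional Jensen gives $\mathbb{E}(\|\mu\|^2) \le \mathbb{E}(\|v\|^2) < \infty$, so $\mu$ is square-integrable. Hence $v\mu^\top$, $\mu v^\top$ and $\mu\mu^\top$ are integrable by Cauchy--Schwarz, and all conditional expectations below exist.

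\textbf{Step 2 (trace and linearity).} Since the trace is linear and the matrix $M$ is deterministic, I have $\operatorname{tr}(M\,\mathbb{E}(W|\mathcal{G})) = \mathbb{E}(\operatorname{tr}(MW)|\mathcal{G})$ a.e. for any integrable random matrix $W$; this holds entrywise by \cite[Thm.~6.5.2(a)]{ash1972probability}. Using $\operatorname{tr}(M(v-\mu)(v-\mu)^\top) = (v-\mu)^\top M (v-\mu)$, I get
\[
\operatorname{tr}(M\operatorname{cov}(v|\mathcal{G})) = \mathbb{E}\bigl(v^\top M v - \mu^\top M v - v^\top M\mu + \mu^\top M\mu \,\big|\, \mathcal{G}\bigr) \text{ a.e.}
\]

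\textbf{Step 3 (pull out $\mu$).} Since $\mu$ is $\mathcal{G}$-measurable and the products involved are integrable, pulling out what is known gives $\mathbb{E}(\mu^\top M v|\mathcal{G}) = \mu^\top M\mu$, $\mathbb{E}(v^\top M\mu|\mathcal{G}) = \mu^\top M\mu$, and $\mathbb{E}(\mu^\top M\mu|\mathcal{G}) = \mu^\top M\mu$, all a.e. This step requires writing each term as a finite sum of scalar products $\mu_i M_{ij} v_j$. Collecting the four terms yields $\mathbb{E}(v^\top Mv|\mathcal{G}) - \mu^\top M\mu$ a.e.

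The only delicate point is the integrability needed to justify the pull-out property, handled in Step~1. Note that no symmetry of $M$ is required.
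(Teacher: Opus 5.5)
Your proof is correct and matches the argument the paper intends: the paper omits the proof, noting only that a special case appears in \cite[p.~12606]{fujimoto2011optimalfinitetime} and that the appendix facts follow from elementary probability theory. Your expansion of $(v-\mu)^\top M(v-\mu)$ via linearity of conditional expectation and the pull-out property for the $\mathcal{G}$-measurable, square-integrable $\mu = \mathbb{E}(v|\mathcal{G})$ is that elementary derivation, and conditional Jensen plus Cauchy--Schwarz properly justify the integrability needed at each step.
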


\begin{lem}
    \label{lemma:opt}
    Define $g : \mathbb{R}^n \times \mathbb{R}^m \rightarrow \mathbb{R}$ as
    \begin{equation*}
        g(x,u) \triangleq \genfrac{[}{]}{0pt}{0}{x}{u}^\top \genfrac{[}{]}{0pt}{0}{\mathmakebox[13.5pt][l]{Q} \; S}{\mathmakebox[13.5pt][l]{S^\top} \; R} \genfrac{[}{]}{0pt}{0}{x}{u} + 2 \genfrac{[}{]}{0pt}{0}{x}{u}^\top \genfrac{[}{]}{0pt}{0}{v}{w} + r
    \end{equation*}
    for some $Q \in \mathbb{S}^n_\geq$, $R \in \mathbb{S}^m_>$, $S \in \mathbb{R}^{n \times m}$, $v \in \mathbb{R}^n$, $w \in \mathbb{R}^m$, and $r \in \mathbb{R}$. Then, for each $x \in \mathbb{R}^n$, there exists a unique $u_x^* \in \mathbb{R}^m$ such that $g(x,u_x^*) = \inf_{u \in \mathbb{R}^m} g(x,u)$. Specifically, $u_x^* = K x + \ell$, where $K \triangleq -R^{-1} S^\top$ and $\ell \triangleq -R^{-1} w$. Furthermore, the minimum value is $g(x,u_x^*) = x^\top(Q - K^\top R K)x + 2 x^\top (v + K^\top w) + r - \ell^\top R \ell$.
\end{lem}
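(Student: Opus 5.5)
The plan is to complete the square in $u$ for each fixed $x \in \mathbb{R}^n$. First I would regroup $g$ by powers of $u$:
\begin{equation*}
    g(x,u) = u^\top R u + 2 u^\top (S^\top x + w) + x^\top Q x + 2 x^\top v + r ,
\end{equation*}
using $x^\top S u = u^\top S^\top x$. Writing $c_x \triangleq S^\top x + w$ and $u_x^* \triangleq -R^{-1} c_x$, which is well defined because $R \in \mathbb{S}^m_>$ is invertible, I would verify the identity
\begin{equation*}
    g(x,u) = (u - u_x^*)^\top R (u - u_x^*) + g(x,u_x^*) \quad \text{for every } u \in \mathbb{R}^m .
\end{equation*}
This follows by expanding the right-hand side and substituting $R u_x^* = -c_x$.

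Existence and uniqueness then come directly from this identity. Since $R \in \mathbb{S}^m_>$, the term $(u-u_x^*)^\top R (u-u_x^*)$ is nonnegative and vanishes only at $u = u_x^*$. Hence $u_x^*$ attains $\inf_{u} g(x,u)$ and is the unique minimizer. Because $u_x^* = -R^{-1}S^\top x - R^{-1} w$, we have $u_x^* = Kx + \ell$ with $K = -R^{-1}S^\top$ and $\ell = -R^{-1}w$, as claimed. The positive semidefiniteness of $Q$ is not needed for this step.

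The only step requiring care is computing the minimum value in the stated form. I would use $R u_x^* = -c_x$ to get
\begin{equation*}
    (u_x^*)^\top R u_x^* + 2 (u_x^*)^\top c_x = -(u_x^*)^\top R u_x^* ,
\end{equation*}
so that $g(x,u_x^*) = x^\top Q x + 2x^\top v + r - (Kx+\ell)^\top R (Kx+\ell)$. Expanding the last term gives
\begin{equation*}
    (Kx+\ell)^\top R (Kx+\ell) = x^\top K^\top R K x + 2x^\top K^\top R \ell + \ell^\top R \ell .
\end{equation*}
The cross term simplifies because $K^\top R \ell = (-SR^{-1}) R (-R^{-1} w) = S R^{-1} w = -K^\top w$, so $-2x^\top K^\top R\ell = 2x^\top K^\top w$. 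Collecting terms yields
\begin{equation*}
    g(x,u_x^*) = x^\top (Q - K^\top R K) x + 2x^\top (v + K^\top w) + r - \ell^\top R \ell ,
\end{equation*}
which is the claimed minimum value. The main obstacle is this bookkeeping of the cross term and its sign; the rest is routine.
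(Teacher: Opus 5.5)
Your proposal is correct and takes essentially the same approach as the paper, which omits the proof and cites unconstrained quadratic optimization \cite[Example 4.5]{boyd2004convex}. Your completed-square identity $g(x,u) = (u-u_x^*)^\top R (u-u_x^*) + g(x,u_x^*)$ gives existence, uniqueness and the minimizer in one step, equivalent to setting $\nabla_u g = 0$ and using $R \in \mathbb{S}^m_>$ for strict convexity, and your handling of the cross term via $K^\top R \ell = -K^\top w$ is right.
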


\bibliographystyle{plain}
\bibliography{references.bib}

@article{akella2025risk,
  author={Akella, P. and Dixit, A. and Ahmadi, M. and Lindemann, L. and Chapman, M. P. and Pappas, G. J. and Ames, A. D. and Burdick, J. W.},
  journal={IEEE Control Systems}, 
  title={{Risk-aware robotics: Tail risk measures in planning, control, and verification [Focus on education]}}, 
  year={2025},
  volume={45},
  number={4},
  pages={46--78},
}

@book{ash1972probability,
    author    = {Ash, R. B.},
    title     = {Real Analysis and Probability},
    year      = {1972},
    address   = {New York, NY, USA},
    publisher = {Academic Press},
}

@article{bauerle2011markov,
    author  = {Bäuerle, N. and Ott, J.},
    title   = {Markov decision processes with average-value-at-risk criteria},
    journal = {Mathematical Methods of Operations Research},
    volume  = {74},
    pages   = {361-379},
    year    = {2011},
    doi     = {10.1007/s00186-011-0367-0},
}

@article{bauerle2021minimizing,
    author  = {Bäuerle, N. and Glauner, A.},
    title   = {Minimizing spectral risk measures applied to {M}arkov decision processes},
    journal = {Mathematical Methods of Operations Research},
    volume  = {94},
    number  = {1},
    pages   = {35-69},
    year    = {2021},
    doi     = {10.1007/s00186-021-00746-w},
}

@article{bernstein1989LQG,
    author  = {Bernstein, D. S. and Haddad, W. M.},
    journal = {{IEEE} Transactions on Automatic Control},
    title   = {{LQG} control with an ${H}_\infty$ performance bound: {A} {R}iccati equation approach},
    year    = {1989},
    volume  = {34},
    number  = {3},
    pages   = {293--305},
    doi     = {10.1109/9.16419},
}

@book{bernstein2018scalar,
    author    = {Bernstein, D. S.},
    title     = {Scalar, Vector, and Matrix Mathematics},
    subtitle  = {Theory, Facts, and Formulas - Revised and Expanded},
    year      = {2018},
    publisher = {Princeton University Press},
    address   = {Princeton, NJ, USA},
}

@book{bertsekas1996neuro,
    title     = {Neuro-Dynamic Programming},
    author    = {Bertsekas, D. P. and Tsitsiklis, J. N.},
    year      = {1996},
    publisher = {Athena Scientific},
    address   = {Belmont, MA, USA},
}

@book{bertsekas2019reinforcement,
    author    = {Bertsekas, D. P.},
    title     = {Reinforcement Learning and Optimal Control},
    address   = {Belmont, MA, USA},
    publisher = {Athena Scientific},
    year      = {2019},
}

@book{boyd2004convex,
    author    = {Boyd, S. P. and Vandenberghe, L.},
    title     = {Convex Optimization},
    year      = {2004},
    address   = {Cambridge, UK},
    publisher = {Cambridge University Press},
}

@article{chapman2023on,
    author  = {Chapman, M. P. and Fau{\ss}, M. and Smith, K. M.},
    title   = {On optimizing the Conditional Value-at-Risk of a maximum cost for risk-averse safety analysis},
    journal = {{IEEE} Transactions on Automatic Control},
    volume  = {68},
    number  = {6},
    pages   = {3720--3727},
    year    = {2023},
    doi     = {10.1109/TAC.2022.3195381},
}

@article{chapman2025riskaware,
    author  = {Chapman, M. P. and Kalogerias, D.},
    title   = {Risk-Aware Stability of Linear Systems},
    journal = {{IEEE} Transactions on Automatic Control},
    volume  = {70},
    number  = {2},
    pages   = {861--876},
    year    = {2025},
    doi     = {10.1109/TAC.2024.3444868},
}

@article{fu2005sector,
    author  = {Fu, M. and Xie, L.},
    title   = {The sector bound approach to quantized feedback control},
    journal = {{IEEE} Transactions on Automatic Control},
    volume  = {50},
    number  = {11},
    pages   = {1698-1711},
    year    = {2005},
    doi     = {10.1109/TAC.2005.858689},
}

@article{fujimoto2011optimalfinitetime,
    title   = {Optimal Control of Linear Systems with Stochastic Parameters for Variance Suppression: The Finite Time Horizon Case},
    author  = {Fujimoto, K. and Ogawa, S. and Ota, Y. and Nakayama, M.},
    journal = {IFAC Proceedings Volumes},
    volume  = {44},
    number  = {1},
    pages   = {12605-12610},
    year    = {2011},
    note    = {18th IFAC World Congress},
    doi     = {10.3182/20110828-6-IT-1002.02289},
}

@article{geromel1995convex,
    title   = {A Convex Approach to the Mixed $\mathcal{H}_2$/$\mathcal{H}_\infty$ Control Problem for Discrete-Time Uncertain Systems},
    author  = {Geromel, J. C. and Peres, P. L. D. and Souza, S. R.},
    journal = {{SIAM} Journal on Control and Optimization},
    volume  = {33},
    number  = {6},
    pages   = {1816--1833},
    year    = {1995},
}

@article{habib2023stochastic,
    author  = {Habib, S. and Ahmarinejad, A. and Jia, Y.},
    title   = {A stochastic model for microgrids planning considering smart prosumers, electric vehicles and energy storages},
    journal = {Journal of Energy Storage},
    volume  = {70},
    pages   = {107962},
    year    = {2023},
    doi     = {10.1016/j.est.2023.107962},
}

@book{hernandezlerma2012discrete,
    author    = {Hernández-Lerma, O. and Lasserre, J. B.},
    title     = {Discrete-Time Markov Control Processes},
    year      = {1996},
    publisher = {Springer},
    series    = {Stochastic Modelling and Applied Probability},
}

@article{ito2018risk,
    title   = {Risk-sensitive linear control for systems with stochastic parameters},
    author  = {Ito, Y. and Fujimoto, K. and Tadokoro, Y. and Yoshimura, T.},
    journal = {{IEEE} Transactions on Automatic Control},
    volume  = {64},
    number  = {4},
    pages   = {1328--1343},
    year    = {2019},
    doi     = {10.1109/TAC.2018.2876793},
}

@article{ito2025weighted,
    title   = {Weighted stochastic {R}iccati equations for generalization of linear optimal control},
    author  = {Ito, Y. and Fujimoto, K. and Tadokoro, Y.},
    journal = {Automatica},
    volume  = {171},
    note    = {art. no. 111901},
    year    = {2025},
    doi     = {10.1016/j.automatica.2024.111901},
}

@article{jia2024decentralized,
    title   = {Decentralized stochastic linear-quadratic optimal control with risk constraint and partial observation},
    author  = {Jia, H. and Ni, Y.-H.},
    journal = {Systems \& Control Letters},
    volume  = {187},
    note    = {art. no. 105778},
    year    = {2024},
    doi     = {10.1016/j.sysconle.2024.105778},
}

@book{khalil2002nonlinear,
    author    = {Khalil, H. K.},
    title     = {Nonlinear Systems},
    edition   = {3rd},
    year      = {2002},
    address   = {Upper Saddle River, NJ, USA},
    publisher = {Prentice Hall},
}

@article{kishida2022risk,
    title   = {Risk-aware linear quadratic control using conditional value-at-risk},
    author  = {Kishida, M. and Cetinkaya, A.},
    journal = {{IEEE} Transactions on Automatic Control},
    volume  = {68},
    number  = {1},
    pages   = {416--423},
    year    = {2023},
    doi     = {10.1109/TAC.2022.3142131},
}

@book{kumar1986stochastic,
    author    = {Kumar, P. R. and Varaiya, P.},
    title     = {Stochastic Systems: Estimation, Identification, and Adaptive Control},
    year      = {1986},
    address   = {Englewood Cliffs, NJ, USA},
    publisher = {Prentice Hall},
}

@conference{majumdar2019how,
    author    = {Majumdar, A. and Pavone, M.},
    editor    = {Amato, N. M. and Hager, G. and Thomas, S. and Torres-Torriti, M.},
    title     = {How Should a Robot Assess Risk? {T}owards an Axiomatic Theory of Risk in Robotics},
    booktitle = {Robotics Research},
    series    = {Springer Proceedings in Advanced Robotics},
    volume    = {10},
    number    = {},
    pages     = {75--84},
    year      = {2019},
    doi       = {10.1007/978-3-030-28619-4\_10},
    publisher = {Springer},
    address   = {Cham, CH},
}

@article{patel2024riskaware,
  title = {Risk-Aware Finite-Horizon Social Optimal Control of Mean-Field Coupled Linear-Quadratic Subsystems},
  author = {Patel, D. and Chapman, M. P.},
  year = 2024,
  journal = {{IEEE} Control Systems Letters},
  volume = {8},
  pages = {2265--2270},
  doi = {10.1109/LCSYS.2024.3409456}
}

@mastersthesis{patel2025risk,
    title   = {Risk-Aware Control of Cone-Bounded Nonlinear Systems},
    author  = {Patel, D.},
    year    = {2025},
    school  = {Edward S. Rogers Sr. Department of Electrical and Computer Engineering, University of Toronto},
    address = {Toronto, ON, Canada},
    type    = {{MASc} thesis},
}

@article{royset2025Risk,
    author  = {Royset, J. O.},
    title   = {Risk-Adaptive Approaches to Stochastic Optimization: {A} Survey},
    journal = {SIAM Review},
    volume  = {67},
    number  = {1},
    pages   = {3--70},
    year    = {2025},
    doi     = {10.1137/22M1538946},
}

@article{ruszczynski2010riskaverse,
    author  = {Ruszczyński, A.},
    title   = {Risk-averse dynamic programming for {M}arkov decision processes},
    journal = {Mathematical Programming},
    volume  = {125},
    number  = {2},
    pages   = {235-261},
    year    = {2010},
    doi     = {10.1007/s10107-010-0393-3},
}

@article{scherzinger1982estimation,
    author  = {Scherzinger, B. M. and Kwong, R. H.},
    title   = {Estimation and control of discrete time stochastic systems having cone-bounded non-linearities},
    journal = {International Journal of Control},
    volume  = {36},
    number  = {1},
    pages   = {33-52},
    year    = {1982},
    doi     = {10.1080/00207178208932873},
}

@article{shapiro2012minimax,
    author  = {Shapiro, A.},
    title   = {Minimax and risk averse multistage stochastic programming},
    journal = {European Journal of Operational Research},
    volume  = {219},
    number  = {3},
    pages   = {719-726},
    year    = {2012},
    doi     = {10.1016/j.ejor.2011.11.005},
}

@book{shapiro2021lectures,
    author    = {Shapiro, A. and Dentcheva, D. and Ruszczyński, A.},
    title     = {Lectures on Stochastic Programming: Modeling and Theory},
    edition   = {3rd},
    series    = {{MOS-SIAM} Series on Optimization},
    year      = {2021},
    doi       = {10.1137/1.9781611976595},
    address   = {Philadelphia, PA, USA},
    publisher = {MOS-SIAM},
}

@article{smith2023on,
    author  = {Smith, K. M. and Chapman, M. P.},
    title   = {On Exponential Utility and Conditional Value-at-Risk as Risk-Averse Performance Criteria},
    journal = {{IEEE} Transactions on Control Systems Technology},
    volume  = {31},
    number  = {6},
    pages   = {2555-2570},
    year    = {2023},
    doi     = {10.1109/TCST.2023.3274843},
}

@book{stein2012stochastic,
    author    = {Stein, J. L.},
    title     = {Stochastic Optimal Control and the U.S. Financial Debt Crisis},
    year      = {2012},
    address   = {New York, NY, USA},
    publisher = {Springer},
}

@book{tolstoy2014anna,
    author    = {Tolstoy, L.},
    title     = {Anna Karenina},
    address   = {Oxford, UK},
    publisher = {Oxford University Press},
    note      = {Translated by Rosamund Bartlett},
    year      = {2014},
}

@conference{tsiamis2020riskconstrained,
    author    = {Tsiamis, A. and Kalogerias, D. S. and Chamon, L. F. O. and Ribeiro, A. and Pappas, G. J.},
    title     = {Risk-Constrained Linear-Quadratic Regulators},
    booktitle = {{IEEE} Conference on Decision and Control},
    volume    = {},
    number    = {},
    pages     = {3040-3047},
    year      = {2020},
    doi       = {10.1109/CDC42340.2020.9303967},
}

@inproceedings{vahs2024risk,
    author    = {Vahs, M. and Tumova, J.},
    title     = {Risk-aware Control for Robots with Non-{G}aussian Belief Spaces},
    booktitle = {{IEEE} International Conference on Robotics and Automation},
    year      = {2024},
    volume    = {},
    number    = {},
    pages     = {11661-11667},
    doi       = {10.1109/ICRA57147.2024.10611412},
}

@article{vanParys2016distributionally,
    author  = {Van Parys, B. P. G. and Kuhn, D. and Goulart, P. J. and Morari, M.},
    title   = {Distributionally Robust Control of Constrained Stochastic Systems},
    journal = {{IEEE} Transactions on Automatic Control},
    volume  = {61},
    number  = {2},
    pages   = {430-442},
    year    = {2016},
    doi     = {10.1109/TAC.2015.2444134},
}

@article{wang2022risk,
    author  = {Wang, Y. and Chapman, M. P},
    title   = {Risk-averse autonomous systems: A brief history and recent developments from the perspective of optimal control},
    journal = {Artificial Intelligence},
    volume  = {311},
    number  = {},
    pages   = {103743},
    year    = {2022},
    doi     = {10.1016/j.artint.2022.103743},
}

@article{wang2025policy,
    author  = {Wang, Z. and Gao, Y. and Wang, S. and Zavlanos, M. M. and Abate, A. and Johansson, K. H.},
    journal = {{IEEE} Transactions on Automatic Control},
    title   = {Policy Evaluation in Distributional {LQR}},
    year    = {2025},
    volume  = {70},
    number  = {11},
    pages   = {7477-7492},
    doi     = {10.1109/TAC.2025.3575649},
}

@article{weitobesubmitted,
    author  = {Wei, C. and Li, K. F. and Kalogerias, D. and Chapman, M. P.},
    title   = {Risk-aware linear-quadratic regulation with
               temporally coupled states},
    journal = {\emph{Submitted to} Automatica\emph{, preprint, arXiv:2603.23737}},
    year    = {unpublished results},
}

@article{whittle1981risk,
    author  = {Whittle, P.},
    title   = {{Risk-Sensitive Linear/Quadratic/Gaussian Control}},
    journal = {Advances in Applied Probability},
    volume  = {13},
    number  = {4},
    pages   = {764--777},
    year    = {1981},
    doi     = {10.2307/1426972},
}

@article{zhang2021policy,
    title   = {{Policy optimization for $\mathcal{H}_2$ linear control with $\mathcal{H}_\infty$ robustness guarantee: Implicit regularization and global convergence}},
    author  = {Zhang, K. and Hu, B. and Ba\c{s}ar, T.},
    journal = {{SIAM} Journal on Control and Optimization},
    volume  = {59},
    number  = {6},
    pages   = {4081--4109},
    year    = {2021},
    doi     = {10.1137/20M1347942},
}

@book{zhou1996robust,
    author    = {Zhou, K. and Doyle, J. C. and Glover, K.},
    title     = {Robust and Optimal Control},
    year      = {1996},
    publisher = {Prentice Hall},
    addresss  = {Upper Saddle River, NJ, USA},
}

\end{document}